\documentclass{article}

 \newcommand{\R}{\mathcal{R}}
\newcommand{\PA}{\mathcal{P}}

\newcommand{\N}{\mathcal{N}}
\newcommand{\ignore}[1]{}

\usepackage{blindtext} 
\usepackage{amsmath,amsfonts,graphicx,amssymb}
\usepackage{bm}
\usepackage{float}
\usepackage{amsthm}
\usepackage{mathtools}
\newtheorem{theorem}{Theorem}[section]
\newtheorem{lemma}{Lemma}[section]

\pdfminorversion=4

\usepackage{color}
\title{Whittle Indexability in Egalitarian Processor Sharing Systems}

\author{Vivek S. Borkar \footnote{Vivek S. Borkar \newline Department of Electrical Engineering,\newline
	Indian Institute of Technology Bombay,\newline
	Powai, Mumbai 400076, India.\newline
              E-mail: borkar.vs@gmail.com \newline
              Research supported in part by CEFIPRA project 5100-IT1}        \and
        Sarath Pattathil \footnote{Sarath Pattathil \newline Department of Electrical Engineering,\newline
	Indian Institute of Technology Bombay,\newline
	Powai, Mumbai 400076, India.\newline
	E-mail: sarathpattathil@iitb.ac.in \newline} }

\date{}

\begin{document}




\maketitle
\begin{abstract}
The egalitarian processor sharing model is viewed as a restless bandit and its Whittle indexability is established. A numerical scheme for computing the Whittle indices is provided, along with supporting numerical experiments.
\end{abstract}

\section{Introduction}
A variety of processor sharing models have been proposed and analyzed over the last half century \cite{Altman2}, \cite{Kleinrock}. Numerous techniques have been used for this purpose. Our aim here is to take up one of the simplest such models and look at it from a different perspective. Specifically, we map it to a restless bandit problem and establish its Whittle indexability. While restless bandits has been a popular paradigm for resource allocation in queuing problems (see, e.g., \cite{Glaze1},  \cite{Glaze4},  \cite{Glaze2},  \cite{Glaze3},  \cite{Jacko},  \cite{Nino1},  \cite{Nino2},  \cite{Nino3}), its application to processor sharing appears new.  This application has some non-standard features such as possibly more than one customer being served by a server at the same time and scaling down of the service rate with the number of customers, which complicates the analysis significantly. We also propose a numerical scheme for approximate computation of Whittle indices and present some numerical experiments.\\

\noindent The article is structured as follows. The next section describes our model and the control problem formulation, its Whittle relaxation and its decoupling via Lagrange multiplier into agent-wise decoupled control problems. Section 3 derives the corresponding dynamic programming equation for the decoupled problem. Section 4 establishes various structural properties of the associated value functions, leading to the existence of an optimal threshold policy. This is the basis of the Whittle indexability proof which, along with a scheme for computing it, is presented in section 5. Section 6 sets up the stage for evaluating and comparing the performance of the Whittle index policy with other heuristics, with annotated numerical experiments being presented in section 7. Section 8 concludes with some comments.\\

\section{The Model}
We consider a simple processor sharing model. Specifically, consider a single Bernoulli arrival process $\{\xi_t\}$ with arrival probability $p \in (0, 1)$. Each arrival is directed to one of $I$ queues. The dynamics of the $i^{th}$ queue is as follows. Its queue length $X^i_t$ at time $t$ updates according to
\begin{equation}\label{eq:dynamics}
X^i_{t+1} = X^i_t - D^i_{t+1} + \nu^i_t\xi_{t+1}
\end{equation}
where $D^i_{t+1}$ is the departure process  as per the egalitarian processor sharing model described below and $\nu^i_t$ is $\{0, 1\}-$valued control variable satisfying: 1 denotes active state (i.e., arrivals are admitted) and 0 is passive (i.e., arrivals are not admitted). The latter is required to be conditionally independent of $D_s^i, \xi_s^i,\ 1 \leq i \leq I, \  s > t$, given  $\{X^i_0; \ D^i_s, \xi^i_s, \ s \leq t; \ \nu^i_s, \ s < t\} \ \forall t$. Such $\{\nu^i_t\}$ will be called `admissible controls'.
\newline
\newline
Egalitarian processor sharing is a service policy where all the customers being served are given the same processing power. This means that the departure process from server $i$ will be a binomial process, with the following properties. If the capacity is $q_i$, it means that if there is only one job in the server, the probability of it being completed in a given time slot is $q_i$. However, if there are K jobs in the server, by the egalitarian scheme, the probability of departure will be reduced to $\frac{q_i}{K}$ for each of the jobs in the server. Note that the mean number of departures in each time slot remains fixed at $q_i$. The departure from the server will be a binomial random variable with departure probability $\frac{q_i}{K}$, i.e.,
\begin{equation}
\mathbb{P}\text{(d departures when x jobs in server)} = {x \choose d}\bigg( \frac{q_i}{x} \bigg)^d \bigg( 1- \frac{q_i}{x} \bigg) ^{x-d}.
\label{Dep_Process}
\end{equation}
The cost of having a job in server $i$ is given by $C_i > 0$.
\newline
\newline
Our aim is to minimize the long run average cost
$$\limsup_{T\uparrow\infty}\frac{1}{T}\sum_{t=0}^{T-1}\sum_i C_iE[X^i_t(\nu^i_t)].$$
subject to
$$\sum_i\nu^i_t = 1 \ \forall m.$$
The hard per stage constraint above makes the problem difficult \cite{Papa}. Following Whittle, we replace the hard constraint by its relaxation
$$\limsup_{T\uparrow\infty}\frac{1}{T}\sum_{t=0}^{T-1}\sum_iE[\nu^i_t] = 1.$$
A standard Lagrange multiplier formulation \cite{Borkar} leads to the unconstrained problem of minimizing
\begin{equation}
\limsup_{T\uparrow\infty}\frac{1}{T}\sum_{t=0}^{T-1}\sum_iE[c(X^i_t, \nu^i_t)] \label{ergcost}
\end{equation}
for
$$c(x,\nu) = Cx + (1-\nu)\lambda$$
where $\lambda$ is the Lagrange multiplier. We choose this specific form in view of the interpretation of $\lambda$ as negative subsidy in the context of Whittle indexability.\\

Note that this is now an average cost Markov decision process with a separable cost function. Hence, given the Lagrange multiplier, it decouples into individual control problems for the agents. This is what we exploit in what follows.\\

\noindent \textbf{Remark 1:} More generally, one may consider a general cost function of the form:
$c(x,\nu) = F(x) + (1-\nu)\lambda$ where $F$ is convex increasing. \\

\noindent Given $\lambda$, the problem gets decoupled into an individual control problem for each processor separately. Following Whittle, we view $\lambda$ as a `negative subsidy' or `tax' for passivity (negative because this is a minimization problem). The problem is Whittle indexable if for any choice of problem parameters $(C_i, q_i, p) \in (0, \infty)\times(0,1)\times(0,1)$, for each of these individual problems the set of passive states increases monotonically  from empty set to the whole state space as $\lambda$ decreases from $\infty$ to $-\infty$. Then for each individual problem, the Whittle index for state $x$ is the value of $\lambda$ for which both active and passive modes are equally preferable for $x$. The corresponding index policy is: for a given state profile across the processors, deem the one with the lowest index active, i.e., arrivals are permitted only for that particular processor.
\newline
\newline
Without loss of generality, the processing rates of different servers may be ordered as
\begin{align}
1>q_1 > q_2 > q_3 > ... >q_I>2p>0. \label{ordered}
\end{align}
Thus no server has departure probability 1. Also, $p < q_i \ \forall i$ ensures stability of the queues as we see below. 
\footnote{Note that for finite systems i.e. when the number of jobs in any server is upper bounded so that any additional jobs that arrive are dropped, this assumption is not required. We will use this fact in simulations.}
\newline
\newline
Given that the cost $C_i > 0 \ \forall i$, if $q_I \leq p$, the choice $\nu^I(n) \equiv 1 \ \forall n$, i.e., all jobs are always sent to server $I$,  leads to the stationary expectation of the cost of $X^I_t$ being $+\infty$, leading to an infinite cost. Thus it is  convenient to assume that $q_I > p$, implying $q_i > p \ \forall i$. We have gone for a stronger condition $q_I > 2p$. We further comment on this assumption in the concluding section.

\begin{lemma}
The function $\Phi([ x_1, \cdots, x_I]) = \sum_{i=1}^{I}e^{a x_i}$, where $a>0$ such that
\begin{align}
p(e^a - 1) < \frac{q_I}{2}(1 - e^{-a}) \nonumber
\end{align}
serves as the Lyapunov function, satisfying, for $X_t := [X^1_t, \cdots, X^I_t]$ and $\theta := [0, \cdots, 0]$,
$$E\left[\Phi(X_{t+1}) - \Phi(X_t) | X_t\right] \leq -b\Phi(X_t) \ \mbox{for} \ X_t \neq \theta$$
for some $b > 0$.
\end{lemma}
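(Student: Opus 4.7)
The plan is to compute the one-step conditional moment generating function per queue using the conditional independence of departures and the arrival indicator, derive a clean per-queue drift bound from the hypothesis, sum over queues while invoking the scheduling constraint $\sum_i\nu^i_t=1$ to control the single positive contribution from an empty queue, and convert the resulting additive drift into a multiplicative Lyapunov inequality.

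First I would factor, for each queue $i$,
\begin{align*}
E\bigl[e^{aX^i_{t+1}}\mid X_t\bigr] = e^{aX^i_t}\cdot E\bigl[e^{-aD^i_{t+1}}\mid X_t\bigr]\cdot E\bigl[e^{a\nu^i_t\xi_{t+1}}\mid X_t\bigr],
\end{align*}
using the conditional independence of $D^i_{t+1}$ and $\xi_{t+1}$ given $X_t$. Setting $A:=p(e^a-1)$ and $B:=q_I(1-e^{-a})/2$, the hypothesis reads $A<B$, and the arrival factor evaluates to $1+A\nu^i_t$. The departure factor equals $1$ when $X^i_t=0$ and equals $(1-(q_i/k)(1-e^{-a}))^k$ when $X^i_t=k\geq 1$ by \eqref{Dep_Process}; I would bound this above first by $e^{-q_i(1-e^{-a})}\leq e^{-q_I(1-e^{-a})}$ via $(1-u/k)^k\leq e^{-u}$ and then by $1-B$ via $e^{-x}\leq 1-x/2$ for $x\in[0,1]$, which applies since $q_I(1-e^{-a})<1$. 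Combining and subtracting $e^{aX^i_t}$ yields the per-queue drift
\begin{align*}
E\bigl[e^{aX^i_{t+1}}-e^{aX^i_t}\mid X_t\bigr]\leq
\begin{cases}
A\nu^i_t, & X^i_t=0,\\[2pt]
(-B+A\nu^i_t)\,e^{aX^i_t}, & X^i_t\geq 1,
\end{cases}
\end{align*}
the second case using $(1+A\nu^i_t)(1-B)-1\leq -B+A\nu^i_t$.

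Summing over $i$ and writing $S_+:=\{i:X^i_t\geq 1\}$, the constraint $\sum_i\nu^i_t=1$ singles out exactly one active queue. A two-case analysis according to whether this active queue lies in $S_0$ or in $S_+$, combined with the elementary bounds $1\leq\sum_{i\in S_+}e^{aX^i_t}$ and $e^{aX^{i_0}_t}\leq\sum_{i\in S_+}e^{aX^i_t}$ (both valid since $|S_+|\geq 1$ when $X_t\neq\theta$), gives the uniform bound
\begin{align*}
E\bigl[\Phi(X_{t+1})-\Phi(X_t)\mid X_t\bigr]\leq -(B-A)\sum_{i\in S_+}e^{aX^i_t}.
\end{align*}
Since $\Phi(X_t)=(I-|S_+|)+\sum_{i\in S_+}e^{aX^i_t}$, the worst-case ratio $\sum_{i\in S_+}e^{aX^i_t}/\Phi(X_t)$ equals $e^a/(I-1+e^a)$, attained at $|S_+|=1$ with the lone non-empty queue at level one; this yields the explicit Lyapunov constant $b=(B-A)e^a/(I-1+e^a)$, strictly positive by the hypothesis.

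The main obstacle is that the per-queue drift is \emph{not} negative at $X^i_t=0$: an empty queue receiving the lone arrival contributes $+A$. This obstruction is overcome only by combining the scheduling constraint (which localizes the positive term to a single queue) with the hypothesis $A<B$, which ensures the negative drift of the at least one non-empty queue strictly dominates.
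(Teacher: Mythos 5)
Your proof is correct, and it actually repairs a gap in the paper's own argument. Both you and the paper start from the same core idea: compute the per-queue conditional moment generating factor, split it into a departure factor and an arrival factor, and bound the departure factor by $1-B$ with $B:=\tfrac{q_I}{2}(1-e^{-a})$ and the arrival factor by $1+A\nu^i_t$ with $A:=p(e^a-1)$. Where the two arguments diverge is in the bookkeeping. The paper writes, in a single line, $E[\Phi(X_{t+1})\mid\mathcal{F}_t]-\Phi(X_t)\leq -B\Phi(X_t)+A\Phi(X_t)$, which amounts to asserting the per-coordinate drift bound $E[e^{aX^i_{t+1}}-e^{aX^i_t}\mid\mathcal{F}_t]\leq -(B-A)e^{aX^i_t}$ uniformly in $i$. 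This is false for empty inactive queues (where the drift is exactly $0$, not $-(B-A)$), and the resulting dimension-independent claim $b=B-A$ actually fails the Lyapunov inequality when $I$ is large relative to $1$: take $X_t=[1,0,\dots,0]$ with queue $1$ active, so the left side is bounded by a constant in $I$ while $-(B-A)\Phi(X_t)$ scales like $-(B-A)I$. You avoid this by tracking the correct per-queue drift (zero departure contribution when $X^i_t=0$), using the scheduling constraint $\sum_i\nu^i_t=1$ to localize the single positive term, and comparing $\sum_{i\in S_+}e^{aX^i_t}$ to $\Phi(X_t)$; this yields the smaller but correct, $I$-dependent constant $b=(B-A)e^a/(I-1+e^a)$. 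Since the lemma only asserts existence of some $b>0$, your proof establishes the statement, and your analysis of the worst-case ratio (achieved at $|S_+|=1$ with the lone non-empty queue at level one) is also correct. Minor remark: your bound chain on the departure factor, via $(1-u/k)^k\leq e^{-u}$ and then $e^{-x}\leq 1-x/2$ for $x\in[0,1]$, is valid here since $q_I(1-e^{-a})<1$, and is a cleaner route to $1-B$ than the paper's ``probability of at least one departure'' phrasing, which yields the same numerical bound.
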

\begin{proof}
The probability that no jobs leave from queue $i$ is given by:
\begin{align}
\mathbb{P}(\text{0 departures from queue i}) &= \bigg(1 - \frac{q_i}{x_i} \bigg)^{x_i} \nonumber \\
&\leq e^{-q_i} \nonumber \\
&\leq e^{-q_I} \nonumber
\end{align}
This gives:
\begin{align}
\mathbb{P}(\text{at least 1 departure from queue i}) & \geq (1-e^{-q_I}) \nonumber \\
& \geq \frac{q_I}{2} \nonumber
\end{align}
Let
\begin{align}
b = \frac{q_I}{2}(1 - e^{-a}) - p(e^a - 1).\nonumber
\end{align}
Then $b > 0$ by our choice of $a$.
We have:
\begin{align}
E\left[\Phi(X_{t+1}) | \mathcal{F}_t\right] - \Phi(X_t)  &\leq \frac{q_I}{2}(e^{-a} - 1)\Phi(X_t) + p(e^a - 1)\Phi(X_t) \nonumber \\
&= \big(-\frac{q_I}{2}(1 - e^{-a}) + p(e^a - 1) \big)\Phi(X_t) \nonumber \\
&= -b \Phi(X_t) \nonumber \\
& < 0.
\label{stable}
\end{align}
This inequality is true regardless of the choice of $\{\nu^i_t\}$. The claim follows.
\end{proof}

We shall be specifically interested in the stationary policies, that is, $\bar{\nu}_t = [\nu^1_t, \cdots, \nu^I_t]$ of the form $\nu^i_t = v^i(X_t) \ \forall \ i,t$ for some $v^i : \N := \{0, 1, 2, \cdots\} \mapsto \{0, 1\}$ and $1 \leq i \leq I$. Under such policies, $\{X_t\}$ is a time-homogeneous Markov chain. By standard abuse of notation, we shall identify such a policy with the map $v(\cdot) := [v^1(\cdot), \cdots, v^I(\cdot)] : \N^I \mapsto$ the unit coordinate vectors in $\R^I$. From (\ref{stable}), we then have the following: Let $\PA(\N) :=$ the space of probability measures on $\N$ with Prohorov topology.

\begin{lemma}\label{general}
Under any stationary policy, $\{X_t\}$  has a single aperiodic communicating class $\widetilde{S}$ that includes $\theta$, possibly with some transient states. Restricted to $\widetilde{S}$, the chain is geometrically ergodic. Furthermore, if $\pi$ denotes its unique stationary distribution and $\tau_{\theta} := \{n \geq 0: X_t = \theta\}$  the first hitting time of $\theta$, then:
\begin{enumerate}

\item there exists $a > 0$ such that $\forall \ x \neq \theta$, $E\left[e^{a\tau_{\theta}}|X_0 = x\right] \leq K_x < \infty$ for some $K_x < \infty$, uniformly in $v$,

\item $\forall \ f : \mathcal{N} \mapsto \R$ that are $O(\Phi(\cdot))$, $E\left[f(X_t)\right] \to \sum_x\pi(x)f(x)$ exponentially, uniformly in $v$, in fact, for $x = [x_1, x_2, \cdots]$,
\begin{equation}
\left|E\left[f(X_t) | X_0 = x\right] - \sum_y\pi(y)f(y)\right| \leq K(1 + \sum_i e^{ax_i})\eta^n \label{expdecay}
\end{equation}
for some $K > 0$ and $0 < \eta < 1$,

\item $\sup E\left[\sum_iX^i_t\right] < \infty$, where the supremum is over all stationary policies and the expectation is over the corresponding stationary distributions. In particular, the latter are compact in $\PA(\N)$,

\item $\sup E\left[\sum_{t=0}^{\tau_{\theta}}X_t\right] < \infty$, with the supremum as above.
\end{enumerate}
\end{lemma}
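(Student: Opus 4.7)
The plan is to bootstrap everything from the uniform geometric drift
\begin{equation*}
E[\Phi(X_{t+1}) \mid X_t] \leq (1-b)\Phi(X_t), \quad X_t \neq \theta,
\end{equation*}
supplied by the preceding Lyapunov lemma, together with a Doeblin-type one-step minorization at the singleton $\{\theta\}$: under any stationary policy $v$, the no-arrival event gives $P_v(\theta, \theta) \geq 1-p$, i.e., $P_v(\theta, \cdot) \geq (1-p)\delta_\theta(\cdot)$, uniformly in $v$. These two uniform ingredients drive the entire lemma. First I would settle the chain structure: the self-loop at $\theta$ yields aperiodicity, and the drift forces the chain into any sublevel set $\{\Phi \leq R\}$ in finite expected time, after which a finite concatenation of ``no arrival and all outstanding jobs depart'' events (each of positive probability bounded below uniformly in $v$) reaches $\theta$. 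Hence $\theta$ is accessible from every recurrent state, so $\widetilde{S}$ is well-defined as the unique communicating class, with any remaining states transient.

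For claim~(1), I would apply optional stopping to the nonnegative supermartingale
\begin{equation*}
M_t := (1-b)^{-(t \wedge \tau_\theta)} \Phi(X_{t \wedge \tau_\theta}),
\end{equation*}
whose supermartingale property on $\{t < \tau_\theta\}$ is exactly the drift inequality. Fatou's lemma then gives $E[(1-b)^{-\tau_\theta} \mid X_0 = x] \leq \Phi(x)/\Phi(\theta) = \Phi(x)/I$, so any $a \in (0, -\log(1-b))$ works with $K_x = \Phi(x)/I$, uniformly in $v$.

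For claim~(2), I would invoke the $V$-uniform geometric ergodicity theorem of Meyn and Tweedie: the drift plus the Doeblin minorization on the small set $\{\theta\}$ yield
\begin{equation*}
\|P_v^n(x, \cdot) - \pi_v\|_\Phi \leq K(1 + \Phi(x))\eta^n,
\end{equation*}
with $K$ and $\eta \in (0,1)$ expressible as explicit functions of $b$, $p$, and $\Phi(\theta) = I$, hence uniform in $v$. Evaluated against an $O(\Phi)$ test function, this is~(\ref{expdecay}). For claim~(3), taking $\pi_v$-expectation of the pointwise bound $E[\Phi(X_{t+1})\mid X_t] \leq (1-b)\Phi(X_t) + C \mathbf{1}_{\{X_t = \theta\}}$ (with $C$ a uniform upper bound on $E[\Phi(X_1) \mid X_0 = \theta]$) gives $E_{\pi_v}[\Phi] \leq C/b$; since $\sum_i x_i \leq \Phi(x)/a$, this bounds $E_{\pi_v}[\sum_i X^i]$ uniformly, and Prohorov tightness yields compactness of $\{\pi_v\}$. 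Claim~(4) then follows from Kac's occupation identity
\begin{equation*}
E_\theta\Big[\sum_{t=0}^{\tau_\theta - 1} \sum_i X^i_t\Big] = \frac{1}{\pi_v(\theta)} E_{\pi_v}\Big[\sum_i X^i\Big],
\end{equation*}
combined with a uniform lower bound $\pi_v(\theta) = 1/E_\theta[\tau_\theta]$ extracted from claim~(1) by Jensen's inequality.

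The principal obstacle I anticipate is propagating the uniformity in $v$ through claim~(2): classical geometric-ergodicity constants can hide subtle dependence on the chain, so I would need to trace through the Meyn--Tweedie proof (or replace it with a direct coupling argument at $\theta$ using only the parameters $b$ and $1-p$) to verify that $K$ and $\eta$ are genuinely policy-independent. This uniformity is essential for the average-cost developments later in the paper.
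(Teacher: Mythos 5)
Your proposal is correct and rests on the same foundation as the paper's argument: the uniform geometric drift of Lemma 1 together with the Doeblin minorization $P_v(\theta,\theta)\geq 1-p$, routed through the Meyn--Tweedie machinery, with uniformity over $v$ coming from policy-independence of the drift constants. The difference is one of texture rather than substance: the paper disposes of claims (1)--(2) by citing Theorems 16.0.1--16.0.2 of \cite{Meyn} and claims (3)--(4) by citing Theorem 8.1 of \cite{Borkar_2}, whereas you give self-contained derivations --- the stopped supermartingale $(1-b)^{-(t\wedge\tau_\theta)}\Phi(X_{t\wedge\tau_\theta})$ plus Fatou for (1), a direct stationary-expectation calculation from the drift (using $\sum_i x_i\leq\Phi(x)/a$) for (3), and Kac's occupation identity together with a Jensen-derived lower bound on $\pi_v(\theta)$ for (4). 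These are valid instantiations of the cited theorems and arguably make the uniformity in $v$ more transparent, since the constants appear explicitly in terms of $b$, $p$, and $I$. Your invocation of $V$-uniform ergodicity under $\psi$-irreducibility (with $\psi=\delta_\theta$) already yields (\ref{expdecay}) on the transient states as well, so you do not need the separate strong-Markov extension the paper carries out at the end. You also correctly flag, at the close, the one genuine subtlety: that the Meyn--Tweedie constants in claim (2) must be traced to confirm they depend on $v$ only through the policy-independent drift and minorization parameters; the paper asserts this uniformity on the strength of the common Lyapunov condition without elaborating.
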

\begin{proof} Assume for the time being that the chain is irreducible, i.e., $\widetilde{S} = \mathcal{N}$.
Fix $v$.   For a stable stationary policy, there must exist a positive recurrent state, and since there is a positive probability of moving to state $\theta$ from this state, it follows that $\theta$ is also positive recurrent. Aperiodicity follows from the observation that there is a nonzero probability of remaining in state $\theta$.

 In view of (\ref{stable}), it follows from Theorem 16.0.1, p.\ 393, \cite{Meyn}, that $\{X_t\}$ is in fact $\Phi$-uniformly ergodic in the sense of (16.2), p.\ 392, \cite{Meyn}, in particular, uniformly ergodic in the sense of (16.6), p.\ 393, \cite{Meyn} \textit{and} geometrically ergodic. The main claim and the first  bullet follow by specializing Theorem 16.0.2, p.\ 394, of \cite{Meyn} to the present case, the second follows likewise from Theorem 16.0.1, pp.\ 393, \cite{Meyn}.
The uniformity in $v$ follows in view of the common Lyapunov condition (\ref{stable}). The last two points follow as in Theorem 8.1, p.\ 108, \cite{Borkar_2}\footnote{In fact the claim/proof of the last claim extends to general admissible controls, though we won't need it here.}.

 Now suppose $x$ is transient. Given any state in the Markov Chain, there is a non-zero probability of moving to state $\theta$ from this state (the unichain property). By a standard argument using the stochastic Lyapunov condition (\ref{stable}) and the optional sampling theorem coupled with Fatou's lemma, one can show that
if $\zeta$ denotes the first hitting time of $\widetilde{S}$, then  $E\left[\Phi(X_{\xi})\right] \leq \Phi(x)$.
Therefore
\begin{eqnarray*}
\lefteqn{\lim_{t\uparrow\infty}\frac{1}{t}\log\left|E\left[f(X_t) | X_0 = x\right] - \sum_y\pi(y)f(y)\right| =} \\
&=& \lim_{t\uparrow\infty}\frac{1}{t}\log\left|E\left[E\left[f(X_{t}) | X_{\zeta}\right]X_0 = x\right] - \sum_y\pi(y)f(y)\right| \\
&& \ \ \ \ \ \mbox{by the strong Markov property} \\
&\leq& \lim_{t\uparrow\infty}\frac{1}{t}\log\left|E\left[K\Phi(X_{\zeta})\eta^t | X_0 = x\right]\right| \\
&\leq& \log(\eta)
\end{eqnarray*}
for a suitably redefined $K$. This extends the claim to the transient states.
\end{proof}

\section{The dynamic programming equation}

Recall our problem of controlling an individual chain to minimize (\ref{ergcost}). We drop the agent index $i$ for the time being for notational ease.  The dynamic programming equation for this problem is then given by
\begin{eqnarray}
\lefteqn{V(x)  = Cx - \beta \ + } \nonumber \\
&&\min \bigg( \sum_{d=0}^{x} {x \choose d} \bigg(\frac{q_i}{x} \bigg)^d \bigg(1-\frac{q_i}{x} \bigg)^{x-d}(pV(x+1-d)
+ (1-p)V(x-d)), \nonumber \\
&&\quad \qquad \lambda + \sum_{d=0}^{x} {x \choose d} \bigg(\frac{q_i}{x} \bigg)^d \bigg(1-\frac{q_i}{x} \bigg)^{x-d}V(x-d) \bigg).
\label{DP2}
\end{eqnarray}
Let $0<\alpha < 1$. The cost for the infinite horizon $\alpha$ discounted problem is:
\begin{align*}
J_\alpha(x,\{ \nu_t \}) := \mathbb{E} \bigg[ \sum_{t=0}^{\infty}\alpha^t(CX_t + (1-\nu_t)\lambda)\Big|X_0 = x \bigg].
\end{align*}
The value function corresponding to this discounted problem will be:
\begin{align*}
V_\alpha(x) = \min_{\nu \in \{ 0,1 \}}J_\alpha(x, \{ \nu _t \})
\end{align*}
where the minimum is over all admissible controls. The dynamic programming equation satisfied by this value function is:
\begin{align*}
V_\alpha (x) = \min_{\nu} \bigg[ Cx + (1-\nu)\lambda + \alpha \sum_{y} p_{\nu}(y|x)V_\alpha(y) \bigg]
\end{align*}
where $p_{\nu}( \cdot | \cdot )$ is the controlled transition probability. Then $\overline{V}_\alpha(.) = V_\alpha(.) - V_\alpha(0)$ satisfies:
\begin{align}
\overline{V}_\alpha (x) = \min_{\nu} \bigg[ Cx + (1-\nu)\lambda - (1-\alpha)V_\alpha(0) + \alpha \sum_{y} p_{\nu}(y|x)\overline{V}_\alpha(y) \bigg].
\label{DPbar}
\end{align}

\noindent Now consider the problem of minimizing (\ref{ergcost}). \\

\begin{lemma}
\label{lemma:Lyapunov}
There exists an optimal stable stationary policy $v^* : \N \mapsto \{0, 1\}$ and $W(x) = e^{ax}$, where a is taken according to (\ref{stable}),  serves as the Lyapunov function for the corresponding optimal Markov chain $\{X^*_t\}$.
\end{lemma}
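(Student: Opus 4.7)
The plan is to establish part~(i), existence of an optimal stable stationary policy, by the classical vanishing-discount argument for countable-state average-cost Markov decision problems. Part~(ii), the Lyapunov bound for $W(x)=e^{ax}$, is essentially a corollary: the drift computation in the Lyapunov lemma at the end of Section~2 depended only on the arrival/departure transition structure and not on the choice of $\nu$, so restricting it to a single queue gives
\[
E\!\left[W(X^*_{t+1})\,\big|\,X^*_t=x\right]-W(x)\ \leq\ -b\,W(x),\qquad x\neq 0,
\]
for \emph{every} stationary policy, and in particular for the policy $v^*$ produced below.

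For part~(i) I would fix $\alpha\in(0,1)$ and work with the discounted DP~(\ref{DPbar}). Since the per-stage cost is nonnegative and the action set has only two elements, a stationary deterministic policy $v_\alpha$ attaining the minimum in~(\ref{DPbar}) exists by standard arguments. The crucial step is to obtain a pointwise bound
\[
0\ \leq\ \overline{V}_\alpha(x)\ \leq\ K(1+e^{ax}),\qquad x\in\N,
\]
with $K$ independent of $\alpha$ on some interval $[\alpha_0,1)$. For the upper bound I would pick a fixed stable reference policy (say $\nu\equiv 1$, stable by the single-queue analogue of the Section~2 Lyapunov lemma), decompose its discounted cost from $x$ at the first hitting time $\tau_0$ of state $0$, and use a Foster--Lyapunov argument together with the last item of Lemma~\ref{general} (adapted to one queue) to control $E\!\left[\sum_{t=0}^{\tau_0-1}(CX_t+\lambda)\,\big|\,X_0=x\right]$ by a constant multiple of $W(x)$. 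The matching lower bound follows from the same hitting-time decomposition applied to the discounted optimal policy.

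With this uniform growth estimate in hand, a standard diagonal extraction yields a subsequence $\alpha_n\uparrow 1$ along which $\overline{V}_{\alpha_n}(x)\to V(x)$ for every $x$ and $(1-\alpha_n)V_{\alpha_n}(0)\to\beta$, and the limit pair $(V,\beta)$ satisfies the average-cost DP equation~(\ref{DP2}). A measurable selector $v^*$ in the minimization on the right-hand side of~(\ref{DP2}) is optimal: the $O(W)$ growth of $V$ combined with the exponential ergodicity in Lemma~\ref{general} makes the standard verification argument go through, the boundary term $E[V(X^*_T)]/T$ vanishing by the uniform integrability supplied by the Lyapunov function. That $v^*$ is stable is then a direct consequence of Lemma~\ref{general}.

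The main obstacle is the uniform $O(e^{ax})$ estimate on $\overline{V}_\alpha$. The Lyapunov inequality supplies both an exponential moment of $\tau_0$ and a geometric return rate, but some care is needed to keep the constant $K$ from blowing up as $\alpha\uparrow 1$: one has to stop the infinite sum defining $V_\alpha(x)$ at $\tau_0$, apply the strong Markov property at that time, and use the Foster bound to control the pre-$\tau_0$ portion while absorbing the post-$\tau_0$ tail into $V_\alpha(0)$. Once this is done, the remainder is standard vanishing-discount and measurable-selection bookkeeping.
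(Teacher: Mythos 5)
Your handling of part~(ii) coincides with the paper's: the drift inequality~(\ref{stable}) was derived without reference to the control $\nu$, so it holds under \emph{every} stationary policy and in particular under $v^*$; the single-queue version with $W(x)=e^{ax}$ is exactly what the paper observes.

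For part~(i) you take a genuinely different route. The paper's own proof of Lemma~\ref{lemma:Lyapunov} is a one-liner: it notes that $\lim_{x\uparrow\infty}\min_{\nu}c(x,\nu)=\infty$, so $c$ is ``near-monotone'' in the sense of Chapter~VI of Borkar's \emph{Topics in Controlled Markov Chains}, and existence of an optimal stable stationary policy then follows from Theorem~1.1 there. That theorem is proved by the convex-analytic method (compactness and lower semicontinuity of the average cost on the set of ergodic occupation measures), not by vanishing discount; the only hypotheses needed are near-monotonicity of the cost and the existence of at least one stable policy. You instead run the vanishing-discount machinery directly: a uniform $O(e^{ax})$ bound on $\overline{V}_\alpha$, diagonal extraction, passage to the ACOE~(\ref{DP2}), and then a verification argument. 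This is correct and is essentially the content of the paper's \emph{next} result, Lemma~\ref{lemma:Average_Cost}, whose proof derives precisely the bound you need from the uniform geometric ergodicity in Lemma~\ref{general} (via estimate~(\ref{expdecay})) and then shows that the argmin policy is optimal. So your proposal effectively merges Lemmas~\ref{lemma:Lyapunov} and~\ref{lemma:Average_Cost} into one argument. The trade-off: the paper's citation is shorter and needs only near-monotonicity plus one stable policy, whereas your vanishing-discount route leans on the stronger uniform geometric ergodicity supplied by the Lyapunov bound — which, in this problem, is available for free, so nothing is lost. One small caution: the uniform bound should be stated as $|\overline{V}_\alpha(x)|\le K(1+e^{ax})$ rather than with $0\le\overline{V}_\alpha(x)$ on the left, since $\overline{V}_\alpha=V_\alpha-V_\alpha(0)$ need not be pointwise nonnegative; this does not affect the rest of your argument.
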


\begin{proof}
Since $C > 0$, the function $c$ above satisfies
 $$\lim_{x\uparrow\infty}\min_{\nu}c(x, \nu) = \infty.$$
Thus it is `near-monotone' in the sense of (1.5), p.\ 58, \cite{Controlled_Markov}. The first claim then  follows by Theorem 1.1, p.\ 58, of \textit{ibid.} (This is proved under irreducibility assumption, but the same arguments work for unichain case, see, e.g., Lemmas 11.8 and 11.9, p.\ 353, \cite{Borkar} for an even more general result.) The second claim follows as in Lemma \ref{general}: in fact the counterpart of Lemma \ref{general} holds here with $W$ in place of $\Phi$.\footnote{We shall implicitly use this fact in what follows whenever we invoke Lemma \ref{general}.}
\end{proof}

\begin{lemma}
\label{lemma:Average_Cost}
$\lim_{\alpha\uparrow 1}\overline{V}_{\alpha} = V$ and $\lim_{\alpha\uparrow 1}(1 - \alpha)V_{\alpha}(0) = \beta$ where $(V, \beta)$ satisfy (\ref{DP2}). Furthermore, $\beta$ is uniquely characterized as the optimal cost $\beta(\lambda)$ and $V$ is rendered  unique on states that are positive recurrent under an optimal policy under the additional condition $V(0) = 0$\footnote{Without the latter, it is unique for positive recurrent states up to an additive constant.}. Finally, the Argmin of the right hand side in (\ref{DP2}) yields the optimal choice of $\nu$ for state $x$.
\end{lemma}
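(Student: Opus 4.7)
The plan is to use the vanishing discount approach (cf.\ Chapter 6 of \cite{Controlled_Markov} and Chapter 11 of \cite{Borkar}), leveraging the Lyapunov/ergodicity apparatus established in Lemmas \ref{general} and \ref{lemma:Lyapunov}.

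The first task is a uniform bound in $\alpha \in (0,1)$ on $(1-\alpha)V_\alpha(0)$: an upper bound by evaluating $V_\alpha(0)$ under any fixed stable stationary policy (whose stationary cost is finite by Lemma \ref{general}(3)), and a lower bound from $c(x, \nu) = Cx + (1-\nu)\lambda \geq \min(0, \lambda)$, which gives $V_\alpha(0) \geq \min(0,\lambda)/(1-\alpha)$. The next task is a uniform two-sided bound on $\overline{V}_\alpha(x) = V_\alpha(x) - V_\alpha(0)$ of growth order $O(\Phi(x))$: for the upper bound, concatenate any stable stationary $v$ up to the first hitting time $\tau_0$ of state $0$ with the $\alpha$-optimal policy thereafter to get
\[
\overline{V}_\alpha(x) \leq E_v\!\left[\sum_{t=0}^{\tau_0 - 1} c(X_t, v(X_t))\,\Big|\,X_0 = x\right],
\]
which is $O(\Phi(x))$ uniformly in $\alpha$ by Lemma \ref{general}(1) and (4); a matching lower bound $\overline{V}_\alpha(x) \geq -K$ comes from the standard near-monotone argument (cf.\ the proof of Theorem 1.1, Chapter 6 of \cite{Controlled_Markov}), which is available since $\min_\nu c(x,\nu) \to \infty$ as $x \to \infty$.

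With these bounds in place, diagonal extraction yields a subsequence $\alpha_n \uparrow 1$ along which $\overline{V}_{\alpha_n}(x) \to V(x)$ pointwise and $(1-\alpha_n) V_{\alpha_n}(0) \to \beta$; passage to the limit in (\ref{DPbar}) is immediate since each sum over $d$ has only $x+1$ terms, yielding (\ref{DP2}). Identification of $\beta$ with $\beta(\lambda)$ is the usual verification argument applied to any stable stationary $v'$ with induced stationary law $\pi'$: taking $\pi'$-expectations in (\ref{DP2}) and using Lemma \ref{general}(3) to justify cancellation of the $V$-terms gives $\beta \leq \sum_x \pi'(x) c(x, v'(x))$, with equality for the Argmin policy (stable by Lemma \ref{lemma:Lyapunov}). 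Uniqueness of $V$ on the positive recurrent class of an optimal policy then follows from the Poisson equation characterisation (unique up to an additive constant, fixed by $V(0) = 0$), and the final claim about the Argmin yielding the optimal action is the standard verification theorem.

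The main obstacle is the lower bound on $\overline{V}_\alpha$: the upper bound falls directly out of Lemma \ref{general}(4), but the lower bound cannot use nonnegativity of $c$ (since $\lambda$ may be negative) and must instead exploit the near-monotone structure of $c$ to confine the optimal trajectory to a bounded region except with vanishing probability; once this is in hand, the rest is the familiar machinery of the vanishing-discount limit.
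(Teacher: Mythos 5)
Your proposal is correct but assembles the crucial uniform-in-$\alpha$ bound on $\overline{V}_\alpha$ by a different mechanism than the paper. You build the bound from two separate pieces: an upper bound via a first-passage decomposition to state $0$ (controlling the passage cost with Lemma~\ref{general}(1),(4)) and a lower bound from the near-monotonicity of $c$, which you correctly flag as the delicate step. The paper instead gets a two-sided bound in a single stroke from \emph{uniform geometric ergodicity}: the $\alpha$-discounted-optimal policy is stationary, every stationary policy obeys the common drift inequality (\ref{stable}), hence the exponential decay (\ref{expdecay}) of $|E[c(X_t,v_\alpha(X_t))\mid X_0=x]-\beta_\alpha|$ holds with constants uniform in $\alpha$, and summing the geometric series yields $|\overline{V}_\alpha(x)|\le \bar{K}/(1-\eta)$ directly. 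Your route is the more classical vanishing-discount argument and would survive in settings without a uniformly ergodic policy family; the paper's is shorter here precisely because Lemma~\ref{general} has already paid the price. One repair to your upper bound as written: the first-passage decomposition actually gives $\overline{V}_\alpha(x)\le E_v\bigl[\sum_{t<\tau_0}\alpha^t c(X_t,v(X_t))\mid X_0=x\bigr]-(1-E[\alpha^{\tau_0}])V_\alpha(0)$; when $\lambda<0$ the last term is not sign-definite (since $V_\alpha(0)$ may be negative), so it cannot be dropped — it is still $O(E_v[\tau_0])$ via $1-\alpha^{\tau_0}\le(1-\alpha)\tau_0$ together with $(1-\alpha)|V_\alpha(0)|=O(1)$, but this needs to be said, as does the fact that the discount factor $\alpha^t$ cannot simply be removed against a running cost of indefinite sign. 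Your identification of $\beta$ with $\beta(\lambda)$, the uniqueness statement, and the verification of the Argmin all match the paper's arguments.
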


\begin{proof}
Let $\beta^*$ denote the supremum of average costs under all stationary policies, which is finite by Lemma \ref{general} (3). Let $v_{\alpha}$ denote an optimal stationary policy for the $\alpha$-discounted problem and $\{X_t\}$ a chain controlled by $v_{\alpha}$ with initial condition depending on the context. Let $\beta_{\alpha}$ denote the corresponding stationary expectation of the cost.  Then $\beta_{\alpha} \leq \beta^*$. Also,
\begin{equation}
|E\left[c(X_t, v_{\alpha}(X_t))| X_0 = i\right] - \beta_{\alpha}| \leq K\eta^n \label{expo}
\end{equation}
for some $K > 0, 0 < \eta < 1$ by Lemma \ref{general}. Thus
\begin{eqnarray}
\left|\overline{V}_{\alpha}(x)\right| &=& \Bigg|E\left[\sum_{t=0}^{\infty}\alpha^tc(X_t, v_{\alpha}(X_t)) \Big| X_0 = x\right] - \nonumber\\
&& E\left[\sum_{t=0}^{\infty}\alpha^tc(X_t, v_{\alpha}(X_t)) \Big| X_0 = 0\right]\Bigg| \nonumber  \\
&=& \Bigg|E\left[\sum_{t=0}^{\infty}\alpha^t(c(X_t, v_{\alpha}(X_t)) - \beta_{\alpha}) \Big| X_0 = x\right] - \nonumber\\
&& E\left[\sum_{t=0}^{\infty}\alpha^t(c(X_t, v_{\alpha}(X_t)) - \beta_{\alpha}) \Big| X_0 = 0\right]\Bigg|  \nonumber \\
&\leq& \sum_{t=0}^{\infty}\alpha^t \left|E\left[c(X_t, v_{\alpha}(X_t)) - \beta_{\alpha}) \Big| X_0 = x\right]\right| +  \nonumber \\
&& \sum_{t=0}^{\infty}\alpha^t \left|E\left[c(X_t, v_{\alpha}(X_t)) - \beta_{\alpha}) \Big| X_0 = 0\right]\right|  \nonumber \\
&\leq& \frac{\bar{K}}{1 - \eta} < \infty \label{bound1}
\end{eqnarray}
by (\ref{expdecay}) for suitable $\bar{K} > 0$. Similarly,
\begin{eqnarray*}
\left|(1 - \alpha)V_{\alpha}(0)\right| &=& (1 - \alpha)\left|E\left[\sum_{t=0}^{\infty}\alpha^tc(X_t, v_{\alpha}(X_t)) \Big| X_0 = 0\right]\right| \\
&\leq& (1 - \alpha)\sum_{t=0}^{\infty}\alpha^t\left|E\left[(c(X_t, v_{\alpha}(X_t)) - \beta_{\alpha})\Big| X_0 = 0\right]\right|  + \beta_{\alpha} \\
&\leq& \frac{(1 - \alpha)\bar{K}}{1 - \eta} + \beta_{\alpha}.
\end{eqnarray*}
Thus  by a standard Tauberian theorem \cite{Filar},
\begin{equation}
\limsup_{\alpha\uparrow 1}(1 - \alpha)V_{\alpha}(0) \leq \limsup_{\alpha\uparrow 1}\beta_{\alpha} \leq \beta^* < \infty.
\label{bound2}
\end{equation}
In view of (\ref{bound1})-(\ref{bound2}), we can invoke the Bolzano-Weirstrass theorem to let $\alpha\uparrow 1$ in (\ref{DPbar}) and conclude that any limit point $(V'(\cdot), \beta')$ of $(\overline{V}_{\alpha}(\cdot), (1 - \alpha)V_{\alpha}(0))$ as $\alpha\uparrow 1$ satisfies (\ref{DP2}). Clearly, $\overline{V}_{\alpha}(0) = 0$, hence $V'(0) = 0$. Furthermore, it follows from (\ref{expdecay}) and the arguments leading to (\ref{bound1}) that $V'$ will be $O(W( \cdot ))$. That $\beta' = \beta(\lambda)$ can be proved by a standard argument: We can drop the minimization on the right hand side of (\ref{DP2}) by replacing $p$ therein by $v'(x)$ where $v'(x)$ is the argmin. Then (\ref{DP2}) itself is a stochastic Lyapunov equation that proves the stability of $v'$. Taking expectations on both sides with respect to the stationary distribution under $v'$ yields, after some cancellations, that $\beta' =$ the stationary expectation of the cost under $v'$. On the other hand, if we replace $p$ by $v(x)$ on the right hand side for some other stable stationary policy $v$ and drop the minimization, we must replace the equality by $\leq$. Taking expectations on both sides with respect to the stationary expectation under $v$ then shows that $\beta' \leq$ the corresponding stationary cost. This proves that $v'$ is optimal and $\beta' = \beta(\lambda)$. The uniqueness part for $V$ can be proved by establishing an explicit representation as in Lemma 2.5, p.\ 79, \cite{Controlled_Markov}. We omit the details as this is not important for our purpose. The last statement is also standard, see section VI.2 of \textit{ibid.}
\end{proof}

\section{Structural properties of the value function}

In this section we prove the threshold nature of the optimal policy for the individual control problems through a sequence of lemmas.

\begin{lemma}\label{increase}
V is increasing in its argument.
\end{lemma}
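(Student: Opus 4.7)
The plan is to reduce to the $\alpha$-discounted problem and then let $\alpha \uparrow 1$. By Lemma \ref{lemma:Average_Cost}, $V = \lim_{\alpha \uparrow 1} \overline{V}_\alpha$, and since the pointwise limit of non-decreasing functions is non-decreasing, it suffices to show that $V_\alpha$ (equivalently $\overline{V}_\alpha$) is non-decreasing for each $\alpha \in (0,1)$. I would do this by successive approximations. Let $T$ denote the Bellman operator for (\ref{DPbar}),
$$(Tf)(x) = \min\Bigl(Cx + \alpha\, E[p f(x+1-D_x) + (1-p) f(x - D_x)],\ Cx + \lambda + \alpha\, E[f(x-D_x)]\Bigr),$$
where $D_x \sim \mathrm{Bin}(x, q/x)$. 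Define $f_0 \equiv 0$ and $f_{n+1} := T f_n$; the arguments of Lemma \ref{lemma:Average_Cost} (polynomial growth estimates on iterates controlled by the Lyapunov function $W$) give $f_n \to V_\alpha$ pointwise. I will prove by induction that each $f_n$ is non-decreasing.

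The inductive step reduces to the following one-step monotonicity property: if $f : \N \to \R$ is non-decreasing, then so are both
$$x \mapsto E[f(x - D_x)] \quad \text{and} \quad x \mapsto E[f(x+1 - D_x)].$$
Granting this, $T f_n$ is a pointwise minimum of two non-decreasing functions (each being the sum of $Cx$, a constant, and $\alpha$ times one of the above expectations), hence non-decreasing. Passing to the limit gives $V_\alpha$ non-decreasing, and then $\alpha \uparrow 1$ gives the claim for $V$.

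The main obstacle is this one-step monotonicity, which is nontrivial because the departure distribution itself depends on $x$: although $E[D_x] = q$ for every $x$, the variance $q(1 - q/x)$ grows with $x$, so neither $D_x \leq_{\mathrm{st}} D_{x+1}$ nor the reverse holds. I would resolve this by an explicit pathwise coupling. Draw iid uniform $U_1, U_2, \ldots, U_{x+1}$ on $(0,1)$ and set
$$D_x := \sum_{i=1}^{x} \mathbf{1}\{U_i < q/x\}, \qquad D_{x+1} := \sum_{i=1}^{x+1} \mathbf{1}\{U_i < q/(x+1)\}.$$
Since $q/(x+1) < q/x$, every index $i \le x$ contributing to $D_{x+1}$ also contributes to $D_x$, so
$$D_{x+1} - D_x \le \mathbf{1}\{U_{x+1} < q/(x+1)\} \le 1,$$
whence $(x+1) - D_{x+1} \ge x - D_x$ almost surely. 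Monotonicity of $f$ then gives $E[f(x+1 - D_{x+1})] \ge E[f(x - D_x)]$ directly for the passive branch, and for the active branch one additionally couples the arrival $\xi$ to be shared between the two systems, so that $(x+1) - D_{x+1} + \xi \ge x - D_x + \xi$ pathwise. This closes the induction and completes the proof.
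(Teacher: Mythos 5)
Your proof is correct, but it takes a genuinely different route from the paper's. The paper argues at the level of sample paths of the full controlled chain: it couples two copies of the process started from $y > z$ with common arrivals and controls, invokes stochastic dominance of the departure mechanism to get $X_t \geq X'_t$ almost surely for all $t$, and then compares costs for every admissible policy before taking the infimum and letting $\alpha \uparrow 1$. You instead work with value iteration for the Bellman operator and reduce everything to a one-step stochastic-monotonicity lemma: that $x \mapsto E[f(x - D_x)]$ and $x \mapsto E[f(x+1-D_x)]$ preserve monotonicity. Both approaches ultimately rest on the same underlying fact, namely that $x - D_x$ is stochastically non-decreasing in $x$, and your explicit uniform-variable coupling giving $D_{x+1} - D_x \leq 1$ almost surely makes concrete the "stochastic dominance argument" that the paper only alludes to; it is arguably the most transparent way to see that claim, given that neither $D_x \leq_{\mathrm{st}} D_{x+1}$ nor the reverse holds. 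The one place your argument is thinner than the paper's is the claim $f_n \to V_\alpha$: Lemma \ref{lemma:Average_Cost} as stated treats the vanishing-discount limit $\alpha \uparrow 1$, not convergence of value iteration for fixed $\alpha$; to make that step airtight you would need to invoke a $W$-weighted-norm contraction argument (which the Lyapunov estimate in the paper does support, but which is never explicitly set up). The paper's direct coupling of the processes sidesteps value iteration entirely and hence avoids this issue, at the cost of leaving the per-step coupling implicit.
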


\begin{proof}

Consider two processes $\{ X_t \}, \{ X_t' \}$ which follow equation (\ref{eq:dynamics}), with common arrival and control processes and with the initial conditions $y > z$ resp. Now, since the departure probability decreases as the number of jobs in the system increases, a stochastic dominance argument shows that $X_t > X_t'$ a.s. Since the cost function is an increasing function of the state variable, we have
$$J_\alpha(z,\{ \nu_t \} ) \leq J_\alpha(y,\{ \nu_t \} ).$$
Taking the minimum over all admissible policies on both sides, we get
$$V_{\alpha}(z) \leq V_{\alpha}(y),$$
implying
$$\bar{V}_{\alpha}(z) \leq \bar{V}_{\alpha}(y).$$
Letting $\alpha \uparrow 1$, and using Lemma \ref{lemma:Average_Cost}, we get $V(z) < V(y)$.

\ignore{Consider two processes $\{ X_t \}, \{ X_t' \}$ which follow equation (\ref{eq:dynamics}), with common arrival and control processes and common  departure mechanism as above, with the initial conditions $y > z$. Then $X_t > X_t'$ a.s.  Since the cost function is an increasing function of the state variable, we have
$$J_\alpha(z,\{ \nu_t \} ) \leq J_\alpha(y,\{ \nu_t \} ).$$
Taking the minimum over all admissible policies on both sides, we get
$$V_{\alpha}(z) \leq V_{\alpha}(y),$$
implying
$$\bar{V}_{\alpha}(z) \leq \bar{V}_{\alpha}(y).$$
Letting $\alpha \uparrow 1$, and using Lemma 3.2, we get $V(z) < V(y)$.}

\end{proof}
\begin{lemma}
\label{lemma:inc_diff}
V has increasing differences, i.e., $$z>0, x>y \Longrightarrow V(x+z) - V(x) \geq V(y+z) - V(y).$$
\end{lemma}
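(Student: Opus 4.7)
The plan is to first prove increasing differences for the $\alpha$-discounted value function $V_\alpha$, then pass to the limit $\alpha \uparrow 1$ via Lemma \ref{lemma:Average_Cost}, using that increasing differences is preserved under pointwise limits. Since $\bar{V}_\alpha$ and $V_\alpha$ differ by a constant, they have the same second differences, so this suffices.

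For the discounted problem, I proceed by value iteration. Starting from $V_\alpha^{(0)} \equiv 0$, define $V_\alpha^{(n+1)} = T V_\alpha^{(n)}$ where $T$ is the discounted Bellman operator, so that $V_\alpha^{(n)} \to V_\alpha$ pointwise. I will show by induction on $n$ that each iterate is both non-decreasing (as in Lemma \ref{increase}) \emph{and} has increasing differences. The base case is trivial. For the inductive step, $(TV)(x) = Cx + \min(AV(x), BV(x))$ where $A$ and $B$ are the expected-future-cost operators for the active and passive actions; the linear term $Cx$ has equal second differences. The work is to show that $AV_\alpha^{(n)}$ and $BV_\alpha^{(n)}$ inherit the two properties from $V_\alpha^{(n)}$, and that the pointwise minimum preserves increasing differences in our specific setting.

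The main obstacle is that the binomial departure distribution ${x \choose d}(q/x)^d(1-q/x)^{x-d}$ has a shape that depends on $x$, so standard arguments for convexity preservation under stochastic kernels do not apply directly. I plan to handle this with a sample-path coupling: assign each potential job an independent uniform $[0,1]$ variable $U_i$, and let the $i^{\text{th}}$ job in a state-$n$ process depart iff $U_i \leq q/n$. Combined with the induction hypothesis and the monotonicity already established in the proof of Lemma \ref{increase}, a path-by-path comparison of three processes started from $x$, $x+1$, $x+2$ yields increasing differences for $AV_\alpha^{(n)}$ and $BV_\alpha^{(n)}$.

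A secondary difficulty is that $\min$ does not generally preserve increasing differences; I will resolve this by simultaneously tracking monotonicity of the ``active advantage'' $AV_\alpha^{(n)}(x) - BV_\alpha^{(n)}(x)$ in $x$ during the same induction. This forces the optimal action to switch at most once (from active to passive) as $x$ increases, so the minimum is realized by a single branch on each side of the switch, and direct inspection at the switch point shows that increasing differences is preserved across the seam. Letting $\alpha \uparrow 1$ then concludes the proof.
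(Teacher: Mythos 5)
Your route is genuinely different from the paper's (which embeds the state space into $\mathbb{R}^+$, proves \emph{convexity} of each finite-horizon iterate by a midpoint argument, and splits the departure integral into a low-departure and a high-departure range to handle the state-dependent binomial kernel), but as sketched it has a concrete gap at the coupling step. Under your coupling, with $Y_n = n - D_n$ and $D_n = \sum_{i=1}^{n}\mathbb{1}\{U_i \le q/n\}$, the increments $Y_{x+1}-Y_x$ and $Y_{x+2}-Y_{x+1}$ are nonnegative but neither bounded by $1$ nor ordered. For example, with $q=0.6$, $x=2$ and $U_1=U_2=U_3=U_4=0.25$ one gets $D_2=2$, $D_3=D_4=0$, hence $Y_2=0$, $Y_3=3$, $Y_4=4$; then for any convex increasing $V$ (even $V(k)=k$) the pathwise second difference $V(Y_4)-2V(Y_3)+V(Y_2)$ is strictly negative. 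So ``a path-by-path comparison of three processes started from $x$, $x+1$, $x+2$'' does \emph{not} yield increasing differences, and the argument must be made in expectation with a careful handling of the tail of the binomial kernel — which is precisely what the paper's split into the regions $D_1$ (more than $(x_1+x_2)/2$ departures) and $D_2$, together with a restricted stochastic dominance inequality, is designed to do.

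The second part of your plan — simultaneously proving monotonicity of the active advantage $AV^{(n)}(x)-BV^{(n)}(x)=p\,E[V^{(n)}(x-D+1)-V^{(n)}(x-D)]$ so that the min is a single-switch (threshold) function — is the right idea in principle, but it is not a formality that can be dispatched ``by direct inspection at the switch point.'' That monotonicity is exactly the content of Lemma~\ref{thresholdpolicy}, which the paper proves \emph{after} increasing differences are established, via a page-long rearrangement of the binomial weights (the passage from (\ref{three}) to (\ref{four}) in the appendix); it again hinges on the same non-trivial comparison of the $\mathrm{Bin}(x,q/x)$ and $\mathrm{Bin}(x+1,q/(x+1))$ kernels. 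If you want to fold it into the same induction, you still need to supply an argument of comparable weight, and the coupling you propose does not obviously give it for the reason above. To repair the proof you would need either (i) the paper's device of embedding into the reals and working with convexity (which lets you compare at the midpoint $(x_1+x_2)/2$ and exploit the $D_1/D_2$ split), or (ii) a genuinely different averaging argument that controls the events where the pathwise second difference goes negative; the sketch as written supplies neither.
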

\begin{proof}
Consider the dynamic programming equation for the n-step finite horizon $\alpha$ discounted problem
\begin{equation}
V_t(x) = \min_{\nu} [c(x,\nu) + \alpha \int V_{t-1}(x-D+\nu\xi)\tilde{p}_1(dD | x)\tilde{p}_2(d\xi)] \: \:, \:t>0, \label{DPeq1}
\end{equation}
with $V_0(x) = Cx, x \geq 0$.
Here $x \mapsto \tilde{p}_1( \cdot | x)$ is the conditional distribution of departures given the current state, and $\tilde{p}_2$ is the arrival distribution. We embed the state space into $\mathbb{R}^+$, i.e., instead of $x$ taking only discrete values (which denote the number of jobs in a server), we allow x to take any real value in $[0, \infty)$. The departures $D$ as well as the arrivals $\xi$ take only non-negative integer values. However, the departure process depends on the number of jobs in the server x which is expected to be an integer. The departure process, which is determined by equation (\ref{Dep_Process}), will be modified in this case by replacing $x$ with $\lfloor x\rfloor$ when $x \geq 1$. For $x < 1$ there will be no departures ($\frac{y}{\lfloor 0 \rfloor} = 0$ by convention). Note that when $x$ is integral, the process will remain integral, recovering the original paradigm.
\newline
\newline
We first show that $V_t(x)$ is convex $\forall t$.
Define
$$h_t(x,\nu) = [c(x,\nu) + \alpha \int V_{t-1}(x-D+\nu\xi)\tilde{p}_1(dD|x)\tilde{p}_2(d\xi)] \: \:, \:n>0. $$
We use an induction argument similar to the one in \cite{Agarwal}.
$V_0(x) = Cx$ is a convex function. Assume that $V_{t-1}(x)$ is a convex function. Consider two points $x_1$ and $x_2$, such that $x_1>x_2$. Let the minimum on the right hand side of (\ref{DPeq1}) for $x = x_1, x_2$ be attained at $\nu_1$ and $\nu_2$ respectively. Then we have
 $$V_t(x_1) = c(x_1,\nu_1) + \alpha \int V_{t-1}(x_1-D+\nu_1\xi)\tilde{p}_1(dD|x_1)\tilde{p}_2(d\xi),$$
 $$V_t(x_2) = c(x_2,\nu_2) + \alpha \int V_{t-1}(x_2-D+\nu_2\xi)\tilde{p}_1(dD|x_2)\tilde{p}_2(d\xi).$$
Adding the two equations, we have
\begin{eqnarray*}
\lefteqn{V_t(x_1) + V_t(x_2) = c(x_1,\nu_1) + c(x_2,\nu_2)} \\
&&+ \ \alpha \int \tilde{p}_1(dD|x_1)\tilde{p}_2(d\xi)V_{t-1}(x_1-D+\nu_1\xi)\\
&&+ \  \alpha \int \tilde{p}_1(dD|x_2)\tilde{p}_2(d\xi) V_{t-1}(x_2-D+\nu_2\xi)\\
&\geq^{*^1}& 2\bigg[c((x_1+x_2)/ 2,(\nu_1+\nu_2)/2) +\\
&& \alpha \int \tilde{p}_1(dD | (x_1 + x_2)/2)\tilde{p}_2(d\xi)V_{t-1}((x_1+x_2)/ 2-D+((\nu_1+\nu_2)/2)\xi) \bigg]\\
&=& 2h_t((x_1+x_2)/ 2,(\nu_1+\nu_2)/2)\\
&\geq^{*^2}& 2V_t((x_1+x_2)/ 2).\\
\end{eqnarray*}

\noindent The first term on the right hand side of $(*^1)$ follows easily from the linearity of the cost function. The second term is derived as follows:
\begin{eqnarray*}
\lefteqn{\int \tilde{p}_1(dD|x_j)\tilde{p}_2(d\xi) V_{t-1}(x_j-D+\nu_1\xi) =} \\
&& \int_{D_1} \tilde{p}_1(dD|x_j)\tilde{p}_2(d\xi)V_{t-1}(x_j-D+\nu_1\xi) \\
&+& \int_{D_2} \tilde{p}_1(dD|x_j)\tilde{p}_2(d\xi) V_{t-1}(x_j-D+\nu_1\xi)
\end{eqnarray*}
for $j = 1,2,$ where
\begin{itemize}
\item $D_1$ = More than $\frac{x_1 + x_2}{2}$ departures,
\item $D_2$ = Less than (or equal to) $\frac{x_1 + x_2}{2}$ departures.
\end{itemize}
Since the departure probability is increased when going from $x_1$ to $x_2$ ($\frac{q}{\lfloor x_2\rfloor} \geq \frac{q}{\lfloor x_1\rfloor}$) and $V_{t-1}(.)$ is increasing in its argument by Lemma \ref{increase}, a stochastic dominance argument leads to
\begin{eqnarray*}
\lefteqn{\int_{D_2} \tilde{p}_1(dD|x_1)\tilde{p}_2(d\xi)V_{t-1}(x_1-D+\nu_1\xi) \geq } \\
&&\int_{D_2} \tilde{p}_1(dD|x_2)\tilde{p}_2(d\xi)V_{t-1}(x_1-D+\nu_1\xi).
\end{eqnarray*}
\newline
By the convexity of $V_{t-1}$, we have
\begin{eqnarray*}
&& \int_{D_2} \tilde{p}_1(dD|x_2)\tilde{p}_2(d\xi)V_{t-1}(x_1-D+\nu_1\xi) + \\
&& \ \ \ \ \ \ \ \ \ \ \ \ \ \ \ \ \ \int_{D_2} \tilde{p}_1(dD|x_2)\tilde{p}_2(d\xi)V_{t-1}(x_2-D+\nu_1\xi)\\
& \geq& 2 \bigg[ \int_{D_2} \tilde{p}_1(dD|x_2)\tilde{p}_2(d\xi) V_{t-1}((x_1+x_2)/ 2-D+\nu_1\xi) \bigg].
\end{eqnarray*}
Hence
\begin{eqnarray}
&& \int_{D_2} \tilde{p}_1(dD|x_1)\tilde{p}_2(d\xi)V_{t-1}(x_1-D+\nu_1\xi) + \nonumber \\
&& \ \ \ \ \ \ \ \ \ \ \ \ \ \ \ \ \  \int_{D_2} \tilde{p}_1(dD|x_2)\tilde{p}_2(d\xi)V_{t-1}(x_2-D+\nu_1\xi) \nonumber \\
& \geq 2& \bigg[ \int_{D_2} \tilde{p}_1(dD|x_2)\tilde{p}_2(d\xi) V_{t-1}((x_1+x_2)/ 2-D+\nu_1\xi) \bigg]. \label{bubba1}
\end{eqnarray}
Since
$$\int_{D_1} p_1(dD|(x_1+x_2)/ 2)p_2(d\xi) V_{n-1}((x_1+x_2)/ 2-D+\nu_1\xi) = 0,$$
$$\int_{D_1} p_1(dD|x_2)p_2(d\xi) V_{n-1}(x_2-D+\nu_1\xi) = 0,$$
and
$$\int_{D_1} \tilde{p}_1(dD|x_1)\tilde{p}_2(d\xi) V_{t-1}(x_1-D+\nu_1\xi) > 0,$$
we have
\begin{eqnarray}
\lefteqn{\int_{D_1} \tilde{p}_1(dD|x_2)\tilde{p}_2(d\xi)V_{t-1}(x_1-D+\nu_1\xi) + } \nonumber \\
&&\int_{D_1} \tilde{p}_1(dD|x_2)\tilde{p}_2(d\xi)V_{t-1}(x_2-D+\nu_1\xi) \nonumber \\
& \geq& 2 \bigg[ \int_{D_1} \tilde{p}_1(dD|x_2)\tilde{p}_2(d\xi) V_{t-1}((x_1+x_2)/ 2-D+\nu_1\xi) \bigg]. \label{bubba2}
\end{eqnarray}
Summing up the  inequalities (\ref{bubba1})-(\ref{bubba2}),  $(*^1)$ follows.
$(*^2)$ is due to the fact that $h_t(x,\nu) \geq V(x)$ $\forall x$.
This proves the convexity of the value function $V_t$ on  $\mathcal{R}^+$ $\forall t.$
This claim can be extended to infinite horizon discounted cost and then to average cost by standard limiting arguments (as in section 3 in the latter case) using the fact that pointwise limits of convex functions are convex. Thus  the average cost value function of our problem is convex. Convexity implies increasing differences. Therefore $V$ has increasing differences. Thus, the function restricted to non-negative integers will retain this property.
This proves that the original $V$ has increasing differences.
\end{proof}

\ \\

\begin{lemma}\label{thresholdpolicy}
The optimal policy is a threshold policy.
\end{lemma}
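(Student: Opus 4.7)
My plan is to read off the decision criterion directly from the dynamic programming equation (\ref{DP2}) and then show that this criterion is monotone in the state. Inside the $\min$ in (\ref{DP2}), subtracting the passive branch from the active branch gives, after cancellation,
$p\sum_{d}{x \choose d}\bigl(\tfrac{q_i}{x}\bigr)^d\bigl(1-\tfrac{q_i}{x}\bigr)^{x-d}\bigl(V(x+1-d) - V(x-d)\bigr) - \lambda.$
Writing $\Delta V(y) := V(y+1) - V(y)$, $D_x \sim \mathrm{Bin}(x, q_i/x)$, and $g(x) := p\,E[\Delta V(x-D_x)]$, the optimal action at state $x$ is therefore active whenever $g(x) < \lambda$ and passive whenever $g(x) > \lambda$ (ties broken arbitrarily, by Lemma \ref{lemma:Average_Cost}). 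Consequently, if I can show that $g$ is nondecreasing on $\mathcal{N}$, the active set is a downward-closed initial segment $\{0,1,\dots,n_{\lambda}-1\}$, i.e.\ the optimal policy is of threshold form with threshold $n_{\lambda} := \min\{x : g(x) \geq \lambda\}$ (with $\min\emptyset = +\infty$, corresponding to ``always active'').

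To prove monotonicity of $g$ I combine two ingredients. First, Lemma \ref{lemma:inc_diff} (increasing differences of $V$) says that $\Delta V$ is itself a nondecreasing function on $\mathcal{N}$. Second, I claim that the post-service state $Y_x := x - D_x$ is stochastically nondecreasing in $x$. For this I represent $Y_x$ as a sum of $x$ independent Bernoulli$(1-q_i/x)$ random variables, one per job remaining in service, and exhibit a per-coordinate monotone coupling: going from $x$ to $x+1$ both adds one extra Bernoulli trial and raises every individual success probability from $1-q_i/x$ to $1-q_i/(x+1)$, so a natural coupling gives $Y_x \leq Y_{x+1}$ almost surely. This is essentially the same stochastic-dominance argument that was used inside the proof of Lemma \ref{increase}. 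Composing the two ingredients, $g(x) = p\,E[\Delta V(Y_x)]$ is the expectation of a nondecreasing function of a stochastically nondecreasing random variable and hence is itself nondecreasing.

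The main obstacle is the stochastic-monotonicity step for $Y_x$: because the binomial parameter $q_i/x$ itself depends on $x$, one cannot simply appeal to monotonicity in the number of trials with a fixed success rate, and this is the place where the egalitarian scaling of the service rate genuinely enters the argument. The explicit coordinate-wise coupling just described handles this subtlety cleanly, and once $g$ is nondecreasing the threshold property follows directly from the characterization of the optimal action as the argmin of the right-hand side of (\ref{DP2}) given in Lemma \ref{lemma:Average_Cost}.
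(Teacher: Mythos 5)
Your proof is correct, and it reaches the same conclusion as the paper via a noticeably cleaner route. The paper also reduces the problem to showing that $f(x) := p\,E[\Delta V(x-D_x)]$ is nondecreasing, and it also relies on Lemma \ref{lemma:inc_diff} (so that $\Delta V$ is nondecreasing) together with stochastic dominance of the departures, but it organizes the dominance argument in two separate stages: first replace the per-trial departure probability $q/x$ in $f(x)$ by $q/(x+1)$ (one stochastic-dominance inequality), then compare the resulting expression with $f(x+1)$ by a page of explicit binomial-coefficient algebra imported from \cite{Cloud}. Your version collapses both stages into a single coordinate-wise coupling of the post-service state $Y_x = x - D_x \sim \mathrm{Bin}\bigl(x, 1 - q_i/x\bigr)$: passing from $x$ to $x+1$ both adds a trial and increases the per-trial success probability, so $Y_x \le Y_{x+1}$ a.s.\ under the natural coupling, and composing with the monotonicity of $\Delta V$ gives $f(x+1) \ge f(x)$ immediately. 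This is exactly the place where the egalitarian $q_i/x$ scaling could have caused trouble (the parameter varies with $x$, so plain ``more trials at a fixed rate'' doesn't apply), and you correctly identified that the scaling happens to help rather than hurt because both the number of trials and the per-trial survival probability move in the same direction. Your reading of the decision rule off (\ref{DP2}) (active iff $f(x)<\lambda$) also matches the paper's. The tradeoff is that your argument is shorter and more conceptual, while the paper's algebraic route is self-contained modulo the cited identity and makes no appeal to couplings; but the two are mathematically equivalent and yours would be a legitimate, arguably preferable, replacement.
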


\ \\

The proof uses the structural properties of the value function established above and is very lengthy, therefore it is given separately in the Appendix.
The threshold nature of the policy has some important implications. Note that we have the unichain property: the existence of a reference state (in our case, $0$) that is accessible from all other states in the sense that the probability of eventually reaching it is strictly positive for any initial state. This implies in particular that there will be at most one communicating class. Under any stable stationary policy, there will be at least one communicating class, hence there has to be exactly one communicating class. There can, however, be transient states. By the threshold nature of the policy, we know that the set of active states will be of the form $A = \{0, 1, \cdots, k\}$ where $k$ is the threshold and the communicating class will be precisely $B = \{ 0, 1, \cdots, k, k+1\}$, because the chain can move from $k$ to $k+1$ when there is an arrival and no departure, but no arrival is allowed at state $k+1$ which is passive, so the chain has to return to $A$.  \\

 The uniform geometric ergodicity condition facilitated the vanishing discount argument of section 3 that led to a solution of the dynamic programming equation. Usually the values of the value function $V$ at transient states would be of little interest as these states do not affect the optimal cost, but in our case $V$ is an intermediary for our proof of Whittle indexability, so it continues to be of importance. \\

 Furthermore, we have considered the individual control problems for agents with a fixed common subsidy $\lambda$, which leads to a parametrized family of control problems. In particular, the threshold guaranteed by the above lemma depends on $\lambda$. The definition of Whittle indexability then requires that we establish the monotone increase of this threshold as $\lambda$ increases. This is what we do in the next section.\\

\section{Whittle indexability}

We are now ready to establish the Whittle indexability for our problem. We need the following intermediate result. Let $\pi^k$ denote the stationary distribution for the threshold policy when the threshold is $k$.\\

\begin{lemma}\label{stat}
$\sum_{j=0}^{k}\pi^{k}(j)$ is an increasing function of $k$.
\end{lemma}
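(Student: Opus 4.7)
The plan is to first convert the quantity $\alpha(k):=\sum_{j=0}^{k}\pi^{k}(j)$ into a form I can compare across $k$ via a flow-balance identity, and then to prove monotonicity of the resulting quantity by a first-passage induction on tail probabilities.

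\textbf{Reduction.} Under threshold $k$ the chain lives on $\{0,1,\ldots,k+1\}$, so $\alpha(k)=1-\pi^{k}(k+1)$. In stationarity the mean arrivals per step and mean departures per step must agree; using (\ref{Dep_Process}) and the fact that $\mathrm{Bin}(j,q/j)$ has mean $q$ for every $j\ge 1$, this gives
\begin{equation*}
p\,\alpha(k)\;=\;p\sum_{j=0}^{k}\pi^{k}(j)\;=\;\sum_{j\ge 1}\pi^{k}(j)\cdot q\;=\;q\bigl(1-\pi^{k}(0)\bigr),
\end{equation*}
so $\alpha(k)=(q/p)(1-\pi^{k}(0))$. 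The claim therefore reduces to showing that $\pi^{k}(0)$ is non-increasing in $k$. Writing $\tau_{0}^{k}=\inf\{t\ge 1:X_{t}^{k}=0\}$ and $\sigma_{0}^{k}=\inf\{t\ge 0:X_{t}^{k}=0\}$, the identity $\pi^{k}(0)=1/E_{0}[\tau_{0}^{k}]$ together with $E_{0}[\tau_{0}^{k}]=1+p\,E_{1}[\sigma_{0}^{k}]$ (obtained by conditioning on the first step out of $0$) further reduces the claim to showing $E_{1}[\sigma_{0}^{k}]\le E_{1}[\sigma_{0}^{k+1}]$. By the tail-sum formula, this in turn follows from the pointwise inequality $\psi_{n}^{k}(j)\le\psi_{n}^{k+1}(j)$ for every $n\ge 0$ and every $j$ in the common state space $\{0,1,\ldots,k+1\}$, where $\psi_{n}^{k}(j):=P_{j}(\sigma_{0}^{k}>n)$.

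\textbf{Induction on $n$.} The base case $\psi_{0}^{k}(j)=\mathbb{1}\{j\ne 0\}=\psi_{0}^{k+1}(j)$ is immediate. Suppose the bound holds at level $n$. The kernels $P_{k}(j,\cdot)$ and $P_{k+1}(j,\cdot)$ agree for $j\in\{0,1,\ldots,k\}$ (both active with identical dynamics), so the inductive hypothesis transfers directly at those states. The only nontrivial comparison is at $j=k+1$: under $P_{k}$ the chain is passive and moves as $k+1\mapsto k+1-D$, while under $P_{k+1}$ it is active and moves as $k+1\mapsto k+1-D+\xi$. Coupling the common departure $D$ and using $\xi\ge 0$ yields the stochastic dominance $P_{k+1}(k+1,\cdot)\succeq_{\mathrm{st}}P_{k}(k+1,\cdot)$. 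Combined with monotonicity of $j\mapsto\psi_{n}^{k+1}(j)$ on $\{0,\ldots,k+2\}$, which follows from exactly the stochastic-dominance coupling used in Lemma \ref{increase} applied to the hitting time of $0$ instead of to the instantaneous cost, this gives
\begin{equation*}
\psi_{n+1}^{k}(k+1)\;\le\;\sum_{j'}P_{k}(k+1,j')\,\psi_{n}^{k+1}(j')\;\le\;\sum_{j'}P_{k+1}(k+1,j')\,\psi_{n}^{k+1}(j')\;=\;\psi_{n+1}^{k+1}(k+1),
\end{equation*}
closing the induction.

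The step I expect to be delicate is precisely the comparison at $j=k+1$. A naive pathwise coupling of the two threshold chains cannot maintain $X^{k}_{t}\le X^{k+1}_{t}$ once they diverge there, because $\mathrm{Bin}(x,q/x)$ is awkward under pathwise comparisons across different $x$ (the mean is $q$ independent of $x$, but larger $x$ spreads the distribution). Passing to the tail quantities $\psi_{n}^{k}$ sidesteps this: one only needs a \emph{one-step} stochastic kernel comparison together with the already-known state-monotonicity of a single chain, both of which are robust.
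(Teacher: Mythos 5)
Your proof is correct in outline and takes a genuinely different route from the paper's. The paper relabels the recurrent states in reverse order, embeds the threshold-$k$ transition matrix into the threshold-$(k{+}1)$ state space, verifies the matrix inequality $P_1U\le P_2U$ with $U$ the lower-triangular all-ones matrix, and invokes Kijima's stochastic-ordering comparison of Markov chains to conclude $\pi^{k+1}(k+2)\le\pi^{k}(k+1)$ directly. You instead trade the single tail atom $\pi^{k}(k+1)$ for the atom at the origin $\pi^{k}(0)$ via the flow-balance identity $p\,\alpha(k)=q\bigl(1-\pi^{k}(0)\bigr)$, reduce to the Kac return-time formula, and prove $E_{1}[\sigma_{0}^{k}]$ is non-decreasing in $k$ by an $n$-induction on the tail probabilities $\psi_{n}^{k}(j)$, where the only nontrivial comparison is a one-step kernel dominance at the single state $j=k+1$ where the two chains disagree. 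This is more elementary (no invocation of a stationary-distribution comparison theorem), and the flow-balance step is a clean reduction the paper does not exploit; the paper's route is shorter once one accepts the Kijima machinery.

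One step needs tightening: the monotonicity of $j\mapsto\psi_{n}^{k+1}(j)$ cannot be borrowed verbatim from Lemma \ref{increase}. That coupling holds the control process $\{\nu_t\}$ fixed and \emph{open-loop} for both copies of the chain, whereas a threshold policy is state-feedback, so once the higher copy sits at the passive state $k+2$ while the lower copy is at an active state, the two copies face different controls, and a naive pathwise coupling can reverse the order exactly as you observe for the cross-threshold comparison (the lower copy can receive an arrival while the higher loses a job). The fix is the same idea you already use, run as a second $n$-induction: the one-step kernel of the threshold-$(k{+}1)$ chain is itself stochastically monotone in the state. For two active states $x>x'$ this follows from $\mathrm{Bin}(x,1-q/x)\succeq_{\mathrm{st}}\mathrm{Bin}(x',1-q/x')$ together with a common arrival; for the one passive state one needs $\mathrm{Bin}(k+2,\,1-q/(k+2))\succeq_{\mathrm{st}}\mathrm{Bin}(k+1,\,1-q/(k+1))+\mathrm{Ber}(p)$, which holds by coupling the $k+2$ survival Bernoullis against the $k+1$ survival Bernoullis plus the arrival Bernoulli, using $1-q/(k+2)\ge 1-q/2>p$; this is where the paper's standing assumption $q>2p$ (more precisely $p<1-q/2$) is used. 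With the monotonicity of $\psi_{n}^{k+1}$ established this way rather than by citing Lemma \ref{increase}, your induction closes as written.
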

\begin{proof}
This can be shown using the idea of stochastic dominance of Markov chains.
Consider two Markov chains, $\{X_t^1\} , \{ X_t^2 \} $ corresponding to the thresholds $k$ and $(k+1)$. Label the positive recurrent states  in the reverse order, i.e., if the threshold is $x$, state with $(x+1)$ jobs is labeled as state $0$, that with $x$ jobs is labeled as state $1$, and so on, till the state with $0$ jobs is labeled $(x + 1)$.
We will show that $\{ X_t^2 \} $ stochastically dominates $\{ X_t^1 \} $.\\

Let the transition matrices of the Markov chains be be $P_1, P_2$ respectively. Their elements $p_{\ell}(i,j), \ell = 1,2,$ give the corresponding probability of going from state $i$ (i.e., $(x+1-i)$ jobs in the server) to state $j$ (i.e., $(x+1-j)$ jobs in the server). These are distributed according to the stipulated departure process and the Bernoulli arrival Process. Extend $P_1$ to an $(k+2)\times(k+2)$ matrix by adding a column and row of zeros. Now consider the following $(k+2)\times(k+2)$ matrix:
\[
U=
  \begin{bmatrix}
    1 & 0 & 0 &. &. &. &. \\
    1 & 1 & 0 & 0 &. &. &. \\
    1 & 1 & 1 & 0 &0 &. &. \\
    . &. &. &. &. &. &.\\
    . &. &. &. &. &. &.\\
    1 &1 &1 &1 &1 &1 &1\\
  \end{bmatrix}
\]
We can easily see that $$P_1U \leq P_2U.$$
This shows that $\{ X_t^2 \} $ stochastically dominates $\{ X_t^1 \}$ - see Definition 3.12, Page 158, \cite{Kijima}.
Using Theorem 3.31, Page 158, \cite{Kijima}, we have  $$\hat{\pi}^\text{k+1}(0) \leq \hat{\pi}^\text{k}(0)$$
where $\hat{\pi}^k$ (resp., $\hat{\pi}^{k+1}$) is the stationary distribution for the threshold $k$ (resp., $k+1$) with the states relabeled as above. In view of our relabeling of the positive recurrent states, this translates into the following in the original notation:
$$\pi^\text{k+1}(k+2) \leq \pi^\text{k}(k+1).$$
This shows that the stationary probability of the only passive state with positive stationary probability decreases as the threshold increases.
However, $$\sum_{j=0}^{k+1}\pi^{k}(j) = 1$$ and  $\pi^{k}(k+1)$ decreases with the threshold $k$. This shows that $\sum_{j=0}^{k}\pi^{k}(j)$ is an increasing function of $k$.
\end{proof}
\begin{theorem}
This problem is Whittle indexable
\end{theorem}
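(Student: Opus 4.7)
The plan is to collapse Whittle indexability into a parametric minimization over the one-parameter family of threshold policies, using Lemma \ref{stat} as the essential monotonicity input. By Lemma \ref{thresholdpolicy}, every optimal stationary policy is a threshold policy, so I would index candidate policies by the threshold $k\in\{-1,0,1,\ldots\}\cup\{\infty\}$ (with $k=-1$ meaning ``always passive'' and $k=\infty$ meaning ``always active''), and denote by $\pi^k$ the corresponding stationary distribution from Lemma \ref{general}. Writing $m_k := \sum_{j}j\,\pi^k(j)$ for the stationary mean queue length and $Q_k := 1 - \sum_{j=0}^{k}\pi^k(j)$ for the stationary probability of being in the passive set, the splitting $c(x,\nu)=Cx+(1-\nu)\lambda$ together with the fact that $\pi^k$ does not depend on $\lambda$ gives
$$\beta_k(\lambda) \;=\; C\,m_k + \lambda\,Q_k,$$
an affine function of $\lambda$ with slope $Q_k$.

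The heart of the argument is that these slopes are strictly monotone in $k$. Since under threshold $k$ the communicating class is $\{0,1,\ldots,k+1\}$, the only passive state with positive stationary mass is $k+1$, so $Q_k = \pi^k(k+1)$; Lemma \ref{stat} then says exactly that $k\mapsto Q_k$ is strictly decreasing. With the slopes strictly monotone in the parameter, a standard parametric-optimization argument applies: for $\lambda_1<\lambda_2$ with corresponding minimizers $k_1^\ast, k_2^\ast$ of $\beta_{\cdot}(\lambda_i)$, adding the two optimality inequalities yields $(Q_{k_1^\ast}-Q_{k_2^\ast})(\lambda_1-\lambda_2)\leq 0$, which forces $k_1^\ast\leq k_2^\ast$. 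The indifference value between thresholds $k$ and $k+1$,
$$\lambda^\ast(k+1) \;=\; \frac{C(m_{k+1}-m_k)}{Q_k-Q_{k+1}},$$
is then identified as the Whittle index for state $k+1$.

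To close the argument I would check the two extremes. As $\lambda\uparrow+\infty$, any $k$ with $Q_k>0$ gives $\beta_k(\lambda)\to+\infty$, while the ``always active'' policy has $Q_\infty=0$ and finite $m_\infty$ by Lemma \ref{general}; hence the minimizer is $k^\ast=\infty$ and the passive set is empty. As $\lambda\downarrow-\infty$, the minimum of $\beta_k(\lambda)$ is attained at the largest slope $Q_{-1}=1$, giving the ``always passive'' policy and passive set $\N$. Combined with the monotonicity of $k^\ast(\lambda)$, the passive set $\{k^\ast(\lambda)+1,k^\ast(\lambda)+2,\ldots\}$ therefore grows monotonically from $\emptyset$ to $\N$ as $\lambda$ decreases from $+\infty$ to $-\infty$, which is exactly Whittle indexability.

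The main obstacle I anticipate is the careful handling of the countably infinite family of thresholds. Specifically, one must verify that $m_k<\infty$ for every $k$ (which falls out of the uniform geometric ergodicity and moment bounds in Lemma \ref{general}) and upgrade the stochastic-dominance inequality in the proof of Lemma \ref{stat} from $\leq$ to a strict $<$ so that the parametric optimization is non-degenerate. Everything else reduces to the two short linear comparisons above, together with a standard Tauberian step (as in Lemma \ref{lemma:Average_Cost}) to connect the minimizer of $\{\beta_k(\lambda)\}_k$ back to the optimal policy selected by the DP equation (\ref{DP2}).
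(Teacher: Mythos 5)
Your proposal is correct and rests on the same pillars as the paper's proof --- Lemma \ref{thresholdpolicy} to restrict attention to threshold policies, the affine decomposition $\beta_k(\lambda) = Cm_k + \lambda Q_k$ with $Q_k = \pi^k(k+1)$, and Lemma \ref{stat} to get monotonicity of the slopes $Q_k$ in $k$ --- but the final comparative-statics step is done by a different, more elementary route. The paper establishes monotonicity of the optimal threshold $x(\lambda)$ by observing that $\beta(\lambda)$, as a pointwise infimum of affine functions, is concave, then invokes the Milgrom--Segal envelope theorem to identify $\beta'(\lambda) = 1 - \sum_{j \leq x(\lambda)}\pi^{x(\lambda)}(j) = Q_{x(\lambda)}$; since a concave function has non-increasing derivative and $k \mapsto Q_k$ is decreasing, $x(\lambda)$ is non-decreasing. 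You replace this with the standard ``interchange'' argument: for $\lambda_1 < \lambda_2$, summing the two optimality inequalities $\beta_{k_1^\ast}(\lambda_1) \leq \beta_{k_2^\ast}(\lambda_1)$ and $\beta_{k_2^\ast}(\lambda_2) \leq \beta_{k_1^\ast}(\lambda_2)$ gives $(Q_{k_1^\ast}-Q_{k_2^\ast})(\lambda_1-\lambda_2) \leq 0$, hence $Q_{k_1^\ast} \geq Q_{k_2^\ast}$, and Lemma \ref{stat} then forces $k_1^\ast \leq k_2^\ast$. The two arguments are mathematically equivalent, but yours is self-contained and avoids the envelope theorem together with the differentiability assumptions it carries. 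You also make explicit two points the paper leaves tacit: the check of the extremes $\lambda \to \pm\infty$, and the fact that Lemma \ref{stat}, as stated, yields only the weak inequality $\pi^{k+1}(k+2) \leq \pi^k(k+1)$, so one needs either strict monotonicity of $Q_k$ or a consistent tie-breaking rule (e.g., always selecting the smallest optimal threshold) to infer $k_1^\ast \leq k_2^\ast$; this caveat applies equally to the paper's envelope-theorem route, so it is a genuine observation rather than a flaw in your approach.
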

\begin{proof}
The optimal average cost of the problem is given by
$$\beta(\lambda) = \inf \: \{ C\sum_kk\pi(k) + \lambda\sum_{k \in B}\pi(k) \}$$
where $\pi$ the stationary distribution and $B$ the set of passive states. $\beta(\lambda)$ is the infimum of this over all admissible policies, which by Lemma \ref{thresholdpolicy} is the same as the infimum over all threshold policies.  Hence it is concave non-decreasing with slope $<1$.
By Lemma \ref{thresholdpolicy}, we can write this as:
\begin{align*}
\beta(\lambda) &= C\sum_{k=0}^{\infty}k\pi^{x(\lambda)}(k) + \lambda\sum_{k=x(\lambda)+1}^{\infty}\pi^{x(\lambda)}(k)\\
&= C\sum_{k=0}^{\infty}k\pi^{x(\lambda)}(k) - \lambda\sum_{k=0}^{x(\lambda)}\pi^{x(\lambda)}(k) + \lambda\\
\end{align*}
where $x(\lambda)$ is the optimal threshold under $\lambda$. By the envelope theorem (Theorem 1, \cite{Milgrom}), the derivative of this function with respect to $\lambda$ is given by
$$-\sum_{k=0}^{x(\lambda)}\pi^{x(\lambda)}(k) + 1.$$
Since $\beta (\lambda)$ is a concave function, its derivative has to be a non-increasing function of $\lambda$, i.e.
$$\sum_{k=0}^{x(\lambda)}\pi^{x(\lambda)}(k) \: \: \: \text{is non-decreasing with} \: \lambda.$$
But, from Lemma \ref{stat}, we know that $\sum_{j=0}^{k}\pi^{k}(j)$ is a non-decreasing function of $k$, where $k$ is the threshold.
Therefore $x(\lambda)$ is a non-decreasing function of $\lambda$. The threshold for each $\lambda$ can be taken as the smallest value of the state $x$ for which it is equally favorable to stay active and passive. The set of passive states for $\lambda$ is given by
$ [x(\lambda), \infty)$.\
Since $x(\lambda)$ is an non-decreasing function of $\lambda$, we have that the set of passive states monotonically decreases to $\phi$ as $\lambda \uparrow \infty$. This implies Whittle indexability.
\end{proof}
\hfill\break
  The Whittle index policy at any time is to activate the server with the smallest value of Whittle Index for the current state profile. We do not have an explicit expression for the Whittle index, so have to take recourse to an iterative scheme to compute it. There is already a considerable body of work on computational schemes for Whittle index. For example,  \cite{Nino3} reviews a recursive algorithm to compute the Whittle index (or the Marginal Productivity Index (MPI)). Closed form expressions for the index have been obtained in some special cases, e.g., \cite{Nino1}. An iterative approach similar to the one presented below can be found in \cite{Glaze3}. A more general treatment for different computational approaches can be found in \cite{Nino4}. A linear programming approach to stochastic bandits which aims to improve the computational efficiency of calculating the index is derived in \cite{Nino4}.

 We have opted here for a rather simple scheme. Specifically, our algorithm computes the Whittle index for state $x$
by iteratively solving the following equation. (Here $\gamma$ is a small step size.)
$$\lambda_\text{t+1} = \lambda_\text{t} + \gamma(\sum_j p_a(j|x)V_{\lambda_\text{t}}(j) - \sum_j p_b(j|x)V_{\lambda_\text{t}}(j) - \lambda_\text{t}), \: \: \: t \geq 0$$
where $p_a(\cdot|.)$ is the transition probability when active and $p_b(\cdot|.)$ is the transition probability when the queue is passive.
\newline
\newline
This is simply an incremental scheme that adjusts the current guess for the index in the direction of decreasing the discrepancy in the active and  passive values which should agree for the exact index. The equations for $V_{\lambda}$ form a linear system which can be solved after each iteration using the current value of $\lambda$. That is, solve the following system of equations for $V = V_{\lambda_t}$ and $\beta = \beta(\lambda_t)$ using $\lambda = \lambda_t$. The combined scheme yields the Whittle index for a fixed $x$. We propose performing this computation for a sufficiently large number of $x$ followed by (say, linear) interpolation.

\begin{align*}
V(y) &= Cy -\beta + \sum_z p_a(z|y)V(z), \: \: y \leq x,\\
V(y) &= Cy + \lambda -\beta + \sum_z p_b(z|y)V(z), \: \: y > x,\\
V(0) &= 0.
\end{align*}

\section{Other Policies for Comparison}

\subsection{Exact Solution}
\label{section_lower_bound}

Consider a system with $I$ server and let $x_1, x_2, \cdots , x_I$ denote the number of jobs in these servers.
\newline
\newline
The Bellman equation for the system is given by:
\begin{align}
V(x_1,x_2, \cdots, x_I) + \beta = \sum_{i} c_ix_i + \min_{i} \big( \mathbb{E}^i[V(.)|x_1,x_2, \cdots, x_I] \big) \nonumber
\end{align}
Here $\mathbb{E}^i[ \ . \ ]$ denotes the expectation with respect to the transition probability where server $i$ is active and all others are passive, and $\beta$ represents the optimal cost of the problem. These equations face the curse of dimensionality and cannot be solved easily even for a moderate number of servers in the system. We show on how to evaluate these equations, using relative value iteration \cite{Puterman}, when there are two servers in the system (the case for more servers is similar).
\begin{align}
V_{n+1}(x_1,x_2) &= C_1x_1+C_2x_2 + \min_{i \in \{ 1,2 \}} \big( \mathbb{E}^i[V_n(.)|x_1,x_2] \big) - V_n(\overline{x_1}, \overline{x_2}), \nonumber \\
\nu_{n+1}(x_1,x_2) &= \underset{i}{\operatorname{argmin}}  \big( \mathbb{E}^i[V_n(.)|x_1,x_2] \big) \nonumber
\end{align}
Here $\nu(x_i,x_2)$ denotes the optimal server to activate when the system is in state $(x_1,x_2)$ and $(\overline{x_1}, \overline{x_2}) $ is any fixed state, say $(0,0)$.

\subsection{`$C \mu$' rule}
\label{section_Cmu}

Another heuristic that we consider as a candidate for scheduling  is inspired by the $C \mu$ rule \cite{Nain}.
This rule assigns priority to the queue with the highest value of $C \mu$ (or the lowest value of $\frac{1}{C \mu}$) where C is the holding cost per job in the queue and $1/\mu$ is the service time requirement of the job. We have suitably changed this rule to match our setting
where there are multiple servers, each serving at different rates. The service rate varies from server to server and also changes with the number of jobs present in each server (Processor Sharing). This gives the average processing rate at server $i$ as $q_i/X_t^i$ where $X_t^i$ is the number of jobs at time $t$ in server $i$. This suggests the following index for each server. Choose the server which has the lowest value of `cost per unit rate' given by
$$\frac{C_i X_t^i}{q_i}$$
where $C_i$ is the cost of holding a job in server $i$.
Here, we have used the reciprocal `cost per unit rate' instead of `rate per unit cost' in order to avoid division by 0. This is a simple greedy heuristic that favors higher service rate while penalizing higher cost. By abuse of terminology, we shall refer to this as the $C \mu$ rule, though the framework here is quite distinct from that of the classical $C \mu$ rule.

\subsection{Random Allocation}
\label{section_random}

In this scheme, at each time instant we pick the server to be activated with uniform distribution independently of all else. The rates of the servers, the number of jobs in the server or the cost of service are not considered in making the choice.

\section{Numerical Results}

In Figures \ref{fig:model_bound1} and \ref{fig:model_bound2}, we see the suboptimality gap of the proposed Whittle index policy with the solution of the original problem, as described in section \ref{section_lower_bound}. The solution to the original problem faces the curse of dimensionality and it becomes infeasible to compute as the number of servers increases. We have performed the simulations for two servers in the system and the figures show that the Whittle policy (which is asymptotically optimal in the $I\uparrow\infty$ limit) is not far away from the optimal even for the case of two servers (i.e., $I = 2$).

\begin{figure}[ht]
  \includegraphics[scale=0.6]{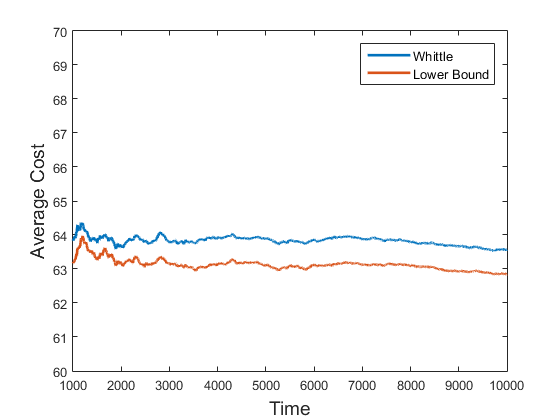}
  \caption{Cost are 100 and 90 with service capacities 0.55 and 0.50 respectively}
  \label{fig:model_bound1}
\end{figure}

\begin{figure}[ht]
  \includegraphics[scale=0.6]{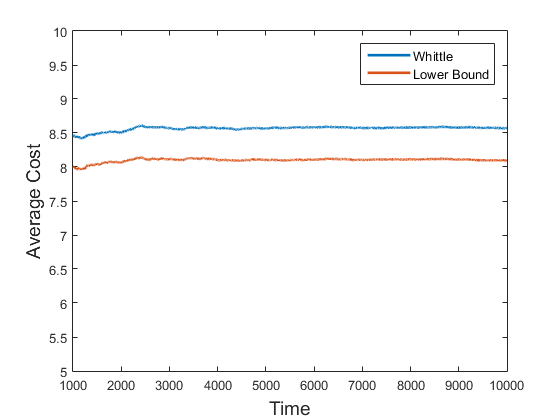}
  \caption{Cost are 12 and 11 with service capacities 0.55 and 0.45 respectively}
  \label{fig:model_bound2}
\end{figure}

In figures \ref{fig:model1}, \ref{fig:model2}, \ref{fig:model3} and \ref{fig:model4} we compare three schemes: a version of the  `$C \mu$ rule' as described in section \ref{section_Cmu}, random server allocation policy as described in section \ref{section_random}, and the Whittle index policy.
\newline
\newline
For these simulations, we consider a system where there are three servers and there is a single arrival process which is Bernoulli. While our foregoing analysis is for countably infinite state space, we used a buffer size of 100 for our numerical experiments. The input arrival rate is fixed at $p = 0.4$. Again, as the system is finite, we do not require the assumption $q_I > 2p$, as mentioned before. The three policies are compared based on different processing rates and different holding costs.
\newline
\newline
While using the Whittle Index policy, we did not ever have to look at the Whittle index of higher states (states above 40). Since the probability of arrivals is significantly less that of departure, higher states have a negligible probability of occurrence and are practically unseen in simulations.

\begin{figure}[H]
  \includegraphics[scale=0.6]{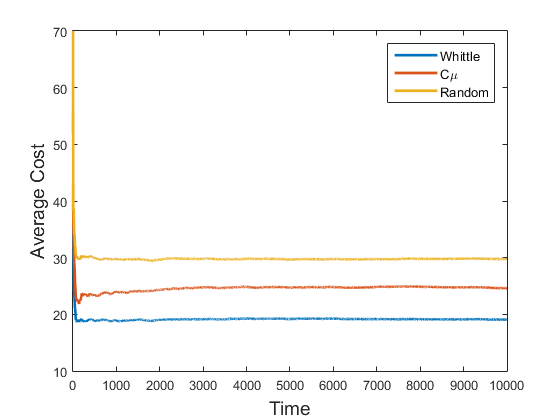}
  \caption{Cost are 30, 29, 28 with service capacities 0.55, 0.50, 0.45 respectively}
  \label{fig:model1}
\end{figure}

\begin{figure}[H]
  \includegraphics[scale=0.6]{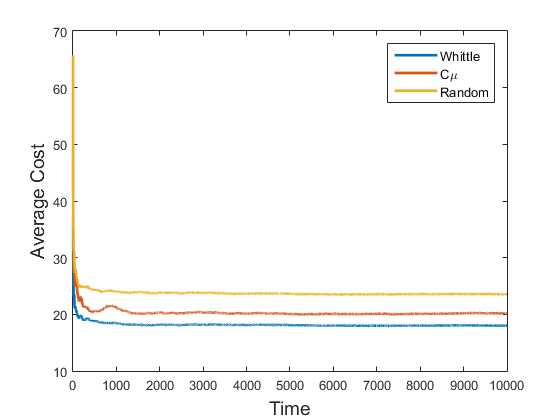}
  \caption{Cost are 30, 29, 28 with service capacities 0.95, 0.50, 0.45 respectively}
  \label{fig:model2}
\end{figure}

\begin{figure}[H]
  \includegraphics[scale=0.6]{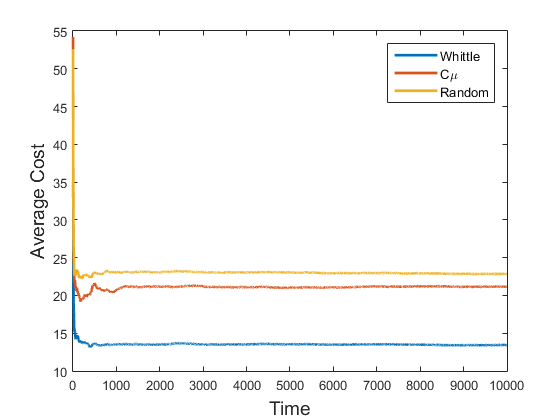}
  \caption{Cost are 40, 23, 16 with service capacities 0.55, 0.50, 0.45 respectively}
  \label{fig:model3}
\end{figure}

\begin{figure}[H]
  \includegraphics[scale=0.6]{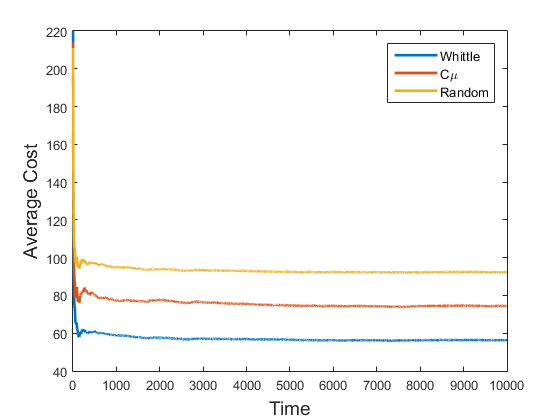}
  \caption{Cost are 100, 90, 80 with service capacities 0.55, 0.50, 0.45 respectively}
  \label{fig:model4}
\end{figure}

The numerical experiments show a clear advantage for the Whittle  index policy.\\

\section{Conclusions and Future Work}
We have considered a simple processor sharing model and established its Whittle indexability. Several possible extensions suggest themselves. On the technical side, ensuring uniform geometric ergodicity was a great boon and facilitated many proofs. In fact, we needed the strong condition $q_I > 2p$ that seemed to arise because of the service rate diminishing with queue length. It would be good to replace it by a  weaker assumption. Ideally, the fact that the cost is `near-monotone' in the sense of  \cite{Controlled_Markov}, Chapter V-VI, and therefore penalizes unstable behavior, should itself suffice for stability of the Whittle index policy. To prove this remains a technical challenge.  Also, there are more complicated models of processor sharing and it would be interesting to see to what extent this line of thinking can be generalized. Finally, the computation of Whittle index is a major computational saving on the computation of the true optimal policy, nevertheless it is still computationally intensive. It would be good to come up with provably good approximations with accompanying error bounds. One such attempt based on reinforcement learning ideas appears in \cite{Chadha}.

\section{Appendix}

\noindent \textbf{Proof of Lemma \ref{thresholdpolicy}:}\\

Recall that $\xi$ is Bernoulli($p$), $D$ is the departure random variable which is distributed as $D \sim\ Bin(x,\frac{q}{x})$ where $x$ is the number of jobs currently in the server and $q$ is the processing rate of the server. Consider the following:
\newline
\begin{align*}
f(x) &= \mathbb{E}_x[V(x - D + \xi )] - \mathbb{E}_x[V(x - D)] \\
&= \sum_{d=0}^{x}{x \choose d}\bigg(\frac{q}{x} \bigg)^d \bigg(1-\frac{q}{x} \bigg)^{x-d}(pV(x+1-d) + (1-p)V(x-d) - V(x-d))\\
&= \sum_{d=0}^{x}{x \choose d}\bigg(\frac{q}{x} \bigg)^d \bigg(1-\frac{q}{x} \bigg)^{x-d}(p(V(x+1-d) - V(x-d)))\\
\end{align*}
For an optimal policy to be a threshold policy, we need $f(x+d) \geq f(x)$ for $d>0$. Therefore it is sufficient to show that $f(x+1) - f(x) > 0$. We have:
\begin{align*}
f(x+1) &= \sum_{d=0}^{x+1}{x+1 \choose d}\bigg(\frac{q}{x+1} \bigg)^d \bigg(1-\frac{q}{x+1} \bigg)^\text{x+1-d}\times\\
& \ \ \ \ \ \ \ \ \ \ \ (p(V(x+2-d) - V(x+1-d))).
\end{align*}
Consider the following bound for $f(x)$:
\begin{align*}
f(x) & \leq \sum_{d=0}^{x}{x \choose d}\bigg(\frac{q}{x+1} \bigg)^d \bigg(1-\frac{q}{x+1} \bigg)^{x-d}(p(V(x+1-d) - V(x-d))).
\end{align*}
This is true because the departure probability  stochastically decreases as $x$ increases.  Since $(V(x+1-d) - V(x-d))$ decreases as d increases by Lemma \ref{lemma:inc_diff}, the bound follows by stochastic dominance.
\newline
\newline
Now we look at the difference
\begin{align*}
f(x+1) - f(x) &= \sum_{d=0}^{x}{x+1 \Big[\choose d+1}\bigg(\frac{q}{x+1} \bigg)^\text{d+1} \bigg(1-\frac{q}{x+1} \bigg)^{x-d}\times\\
& \ \ \ \ \ \ \ \ \ \ \ p(V(x+1-d) - V(x-d))\\
&- {x \choose d}\bigg(\frac{q}{x} \bigg)^d \bigg(1-\frac{q}{x} \bigg)^{x-d}p(V(x+1-d) - V(x-d))\Big] \\
&+ \bigg(1 - \frac{q}{x+1} \bigg)^\text{x+1} p(V(x+2) -V(x+1))\\
& \geq \bigg(1 - \frac{q}{x+1} \bigg)^\text{x+1} p(V(x+2) -V(x+1))\\
&+ p\sum_{d=0}^{x}{x \choose d} \bigg( \frac{q}{x+1} \bigg)^d \bigg(1-\frac{q}{x+1} \bigg)^{x-d}\times\\
& \ \ \ \ \ \ \ \ \ \ \ \bigg(\frac{x+1}{d+1} \frac{q}{x+1} -1\bigg)(V(x+1-d) - V(x-d)). \\
\end{align*}
Let $q_x = \frac{q}{x+1}$. Then we have
\begin{align*}
f(x+1) - f(x) & \geq \big(1 - q_x \big)^\text{x+1} p (V(x+2) -V(x+1))\\
&+ p\sum_{d=0}^{x}{x \choose d} \big( q_x \big)^d \big(1-q_x \big)^{x-d}\times\\
& \ \ \ \ \ \ \ \ \ \ \  \bigg(\frac{x+1}{d+1} q_x -1\bigg)(V(x+1-d) - V(x-d)). \\
\end{align*}
Let $B(x) = p(V(x+1) - V(x))$.
From the Lemma \ref{lemma:inc_diff}, we know that B(.) is a non-decreasing function.
The above inequality can be rewritten as
\begin{align*}
f(x+1) - f(x) & \geq \big(1 - q_x \big)^\text{x+1}B(x+1)\\
&+ \sum_{d=0}^{x}{x \choose d} \big( q_x \big)^d \big(1-q_x \big)^{x-d} \bigg(\frac{x+1}{d+1} q_x -1\bigg)B(x-d). \\
\end{align*}
We simplify the expression as follows:
\begin{align}
&f(x+1) - f(x) \nonumber \\
&\geq \big(1 - q_x \big)^\text{x+1}B(x+1) + \sum_{d=0}^{x}{x \choose d} \big( q_x \big)^d \big(1-q_x \big)^{x-d} \times \nonumber \\
& \ \ \ \ \ \ \ \ \ \ \ \bigg(\frac{x+1}{d+1} q_x -1\bigg)B(x-d) \nonumber \\
&= (1-q_x)^{x+1}  B(x+1) - q_x^x(1-q_x)B(0) \nonumber \\
&+ \sum_{d=0}^{x-1} {x \choose d} q_x^{d} (1-q_x)^{x-d} \left(  \frac{x+1}{d+1}q_x - 1 \right)B(x-d) \label{three}
\end{align}
\begin{align}
& = (1-q_x)^{x+1}  B(x+1) - q_x^x(1-q)B(0) \ + \nonumber \\
& \ \ \ \ \ \ \ \ \ \ \ \ \sum_{d=0}^{x-1} \left({x \choose d+1} q_x^{d+1} (1-q_x)^{x-d} - {x \choose d} q_x^{d} (1-q_x)^{x+1-d} \right)B(x-d) \label{four}\\
& = \left\{ (1-q_x)^{x+1}  B(x+1) + \sum_{d=0}^{x-1}{x \choose d+1} q_x^{d+1} (1-q_x)^{x-d}B(x-d)\right\} - \nonumber \\
& \ \ \ \ \ \ \ \ \ \ \ \ \ \left\{ q_x^x(1-q_x)B(0)  + \sum_{d=0}^{x-1}{x \choose d} q_x^{d} (1-q_x)^{x+1-d}B(x-d) \right\} \nonumber \\
& = \left\{(1-q_x)^{x+1}  B(x+1) + \sum_{d=1}^{x}{x \choose d} q_x^{d} (1-q_x)^{x+1-d}B(x+1-d)\right\} \nonumber \\
&- \sum_{d=0}^{x}{x \choose d} q_x^{d} (1-q_x)^{x+1-d}B(x-d)\nonumber \\
& = \sum_{d=0}^{x}{x \choose d} q_x^{d} (1-q_x)^{x+1-d}B(x+1-d) - \sum_{d=0}^{x}{x \choose d} q_x^{d} (1-q_x)^{x+1-d}B(x-d) \nonumber \\
& = (1-q_x) \sum_{d=0}^{x}{x \choose d} q_x^{d} (1-q_x)^{x-d} \left(B(x+1-d) - B(x-d) \right) \nonumber \\
& \geq 0. \nonumber
\end{align}
The passage from (\ref{three}) to (\ref{four}) is given in \cite{Cloud}. The proof has been reproduced below for sake of completeness.
We have proved that
$$f(x+1) - f(x) \geq 0,$$
which shows that the optimal policy is a threshold policy.

\bigskip

\noindent \textbf{Proof of term in (\ref{three}) to (\ref{four}) }

From (\ref{three}), we have:

\begin{align*}
{x \choose d} q_x^{d} (1-q_x)^{x-d} \left(  \frac{(x+1)q_x}{d+1} - 1 \right)
& = {x \choose d} q_x^{d} (1-q_x)^{x-d} \left(  \frac{(x+1)q_x \pm q_xd}{d+1} - 1 \right) \\
& = {x \choose d} q_x^{d} (1-q_x)^{x-d} \left(  \frac{(x-d)q_x + (d+1)q_x}{d+1} - 1 \right) \\
& = {x \choose d} q_x^{d} (1-q_x)^{x-d} \left(  \frac{(x-d)q_x}{d+1} - 1 + q_x \right) \\
& = {x \choose d} q_x^{d} (1-q_x)^{x-d} \left(  \frac{(x-d)q_x}{d+1} - (1 - q_x) \right) \\
& = {x \choose d} q_x^{d} (1-q_x)^{x-d} \frac{(x-d)q_x}{d+1} \\
& \quad - {x \choose d} q_x^{d} (1-q_x)^{x-d}(1 - q_x) \\
& = \frac{x!}{(x-d)! d!} \frac{(x-d)}{d+1} q_x^{d+1} (1-q_x)^{x-d}  \\
& \quad - {x \choose d} q_x^{d} (1-q_x)^{x+1-d} \\
& = \frac{x!}{(x-d-1)! (d+1)!} q_x^{d+1} (1-q_x)^{x-d}  \\
& \quad - {x \choose d} q_x^{d} (1-q_x)^{x+1-d} \\
& = {x \choose d+1} q_x^{d+1} (1-q_x)^{x-d} \\
& \quad - {x \choose d} q_x^{d} (1-q_x)^{x+1-d} \\
\end{align*}
The term on the right hand side of the final step is the desired term in (\ref{four}).

\end{document}